\documentclass[11pt]{article}

\usepackage[margin=1in]{geometry}
\usepackage{amsmath,amssymb,amsthm,mathtools}
\usepackage{thmtools}
\usepackage{hyperref}
\usepackage{enumitem}
\usepackage{microtype}
\usepackage[capitalize]{cleveref}

\usepackage{bbm}
\hypersetup{
  colorlinks=true,
  linkcolor=blue,
  citecolor=blue,
  urlcolor=blue
}

\newtheorem{theorem}{Theorem}

\newtheorem{lemma}[theorem]{Lemma}
\newtheorem{definition}[theorem]{Definition}
\newtheorem{claim}[theorem]{Claim}
\newtheorem{corollary}[theorem]{Corollary}

\newtheorem{fact}[theorem]{Fact}
\theoremstyle{remark}

\newcommand{\R}{\mathbb{R}}

\newcommand{\Z}{\mathbb{Z}}

\newcommand{\ket}[1]{\lvert #1\rangle}
\newcommand{\bra}[1]{\langle #1\rvert}
\newcommand{\braket}[2]{\langle #1\vert #2\rangle}

\usepackage{crossreftools}

\usepackage{silence}
\makeatletter
\AddToHook{cmd/appendix/before}{\def\cref@section@alias{appendix}\def\cref@subsection@alias{appendix}}
\makeatother

\usepackage[backend=biber, style=numeric,  sorting=none, maxbibnames=99]{biblatex}
\allowdisplaybreaks
\title{The QAOA on the ring of disagrees}
\author{Kunal Marwaha\thanks{Google Quantum AI} \thanks{University of Chicago}
}
\date{}

\begin{document}
\maketitle
\begin{abstract}
    We study the performance of symmetric local algorithms finding large cuts on the cycle graph. Such algorithms that cannot see the whole graph at depth $p$ cut at most a $\frac{2p+1}{2p+2}$ fraction of edges in expectation.
    We prove that the QAOA achieves this value, a long-standing conjecture of~\cite{Farhi2014QAOA}.
    Curiously we do this without finding the optimal parameters. 
    Instead we show it is equivalent to find an optimal pair of Laurent polynomials of degree at most $2p-1$.
    This is made possible by recasting the QAOA on one qubit in the language of quantum signal processing.
\end{abstract}
\section{Introduction}
Let $C_{n}$ be the cycle with $n \ge 3$ vertices. The maximum cut of $C_n$ has size either $n$ or $n-1$ depending on whether $n$ is even or odd. It is easy to find such a cut. 
However not every algorithm is guaranteed to find a cut of this size. For example consider classical and quantum algorithms that are \textit{symmetric} (which means the circuit commutes with flipping all bits)
and \textit{local}. 
In these algorithms only the local neighborhood around an edge is used to decide whether or not an edge is cut.
Consider any such algorithm that observes local neighborhoods up to depth $p$. If the algorithm cannot see the whole graph $C_n$, it produces cuts of size at most $n \cdot \frac{2p+1}{2p+2}$~\cite{mbeng2019quantum,Bravyi2020Obstacles} in expectation.

The Quantum Approximate Optimization Algorithm (QAOA)~\cite{Farhi2014QAOA} is a symmetric local algorithm.
Here the problem of finding large cuts of $C_n$ has been called \textit{the ring of disagrees}.
For this problem \cite{Farhi2014QAOA} conjectured that the depth-$p$ QAOA finds cuts of size $n \cdot \frac{2p+1}{2p+2}$ in expectation when it cannot see the whole graph, and of maximum size otherwise.
While this conjecture is supported by significant numerical evidence~\cite{Farhi2014QAOA, Wang2018Fermionic, mbeng2019quantum,Rabinovich2025Overparametrization}, it remarkably has remained open for 12 years. In this work we finally prove the conjecture.

\subsection{Statement}
Define the \emph{cost} and \emph{mixer} Hamiltonians
\begin{align*}
    H_C &:= \sum_{i=1}^{n} \frac{1}{2} (I-Z_i Z_{i+1})\,, \\
    H_M &:= \sum_{i=1}^{n} X_i\,,
\end{align*}
where indices are modulo $n$. The depth-$p$ QAOA state, given $\vec{\beta} := (\beta_1, \dots, \beta_p), \vec{\gamma} :=(\gamma_1, \dots, \gamma_p)$, is
$$
\ket{\vec{\gamma}, \vec{\beta}} := e^{-i \beta_p H_M}e^{-i \gamma_p H_C} \dots e^{-i \beta_1 H_M}e^{-i \gamma_1 H_C} \ket{+}^{\otimes n}\,.
$$
Here we prove the following exact formulae:
\begin{theorem}[The QAOA cannot see the whole graph]
\label{thm:ring}
When $n \ge 2p + 2$,
    \begin{align*}
    \frac{1}{n} \cdot \max_{\vec{\gamma}, \vec{\beta}} \bra{\vec{\gamma}, \vec{\beta}} H_C \ket{\vec{\gamma}, \vec{\beta}} = \frac{2p+1}{2p+2}\,.
    \end{align*}
\end{theorem}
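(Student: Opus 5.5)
Two inequalities are needed. The upper bound $\le \frac{2p+1}{2p+2}$ is the known obstruction for symmetric local algorithms \cite{mbeng2019quantum,Bravyi2020Obstacles}, and the QAOA belongs to this class. Concretely, when $n \ge 2p+2$, the lightcone of each edge $Z_iZ_{i+1}$ at depth $p$ is a path of $2p+2$ vertices. So the expectation $\langle Z_iZ_{i+1}\rangle$ equals the value the same circuit gives on the ring $C_{2p+2}$, edge by edge and with identical parameters. On $C_{2p+2}$ (even), the bit-flip symmetry together with the fact that the path sees an odd cycle obstruction when closed up gives the bound. I will simply invoke this, since it is cited as known. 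The real content is the lower bound: exhibiting parameters with $\langle Z_1Z_2\rangle = -\frac{p}{p+1}$, since the cut fraction is $\frac12(1-\langle Z_1Z_2\rangle)$.

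For the lower bound I will use the standard Jordan--Wigner/fermionic reduction of the ring of disagrees (as in \cite{Wang2018Fermionic}). In momentum space, $H_C$ and $H_M$ decompose into commuting $2\times 2$ blocks indexed by $k$, so the QAOA becomes a product of single-qubit circuits. Each block alternates $e^{-i\gamma_j(\cos k\,\sigma_z + \sin k\,\sigma_x)}$-type rotations with $e^{-i\beta_j \sigma_z}$. Writing $w=e^{ik}$, the single-edge energy takes the form $\frac{1}{n}\sum_k f(w)$, where $f$ is a real Laurent polynomial in $w$ of degree at most $2p+1$ (depth $p$ gives bounded Fourier support, which is exactly the locality). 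Because $n\ge 2p+2$, the average over $n$-th roots of unity of such a polynomial equals its constant term. This is the analytic form of ``cannot see the whole graph'': the energy is $\int_0^{2\pi} f(e^{ik})\,\frac{dk}{2\pi}$, independent of $n$.

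Next I recast each block as quantum signal processing. The single-qubit product is $\prod_j e^{i\beta_j\sigma_z}\,W(w)^{\gamma\text{-dependent}}$. By the QSP characterization, the achievable unitaries are exactly $\begin{pmatrix}P & Q\\ -Q^* & P^*\end{pmatrix}$ with $P,Q$ Laurent polynomials of degree $\le 2p-1$ (after factoring the fixed first/last layers acting on $\ket{+}$) satisfying $|P|^2+|Q|^2=1$ on the unit circle, plus parity constraints. Thus maximizing over $(\vec\gamma,\vec\beta)$ is equivalent to maximizing a quadratic functional of $(P,Q)$—the constant term of something like $\mathrm{Re}(w\,\overline{(\cdot)}(\cdot))$—over pairs with this complementarity constraint. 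This holds provided the QSP angle-existence theorem applies to the specific $\gamma$-rotation, which I expect needs the $\gamma_j$ restricted (e.g.\ $\gamma=\pi/4$-type steps or a reparametrization). Handling this cleanly is the main obstacle: verifying that every feasible polynomial pair is realized by some QAOA angles, not just by general SU(2) QSP.

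Finally I will solve the polynomial problem. I will guess an explicit extremal pair, a Fejér-kernel-like choice with $|P(w)|^2$ equal to a normalized Fejér sum. I will check that it satisfies $|P|^2+|Q|^2=1$ (using Fejér–Riesz to produce $Q$ from $1-|P|^2\ge 0$) and that its constant term gives exactly $-\frac{p}{p+1}$, matching the classic $\frac{2p+1}{2p+2}$ via $\frac{1}{p+1}\sum$ Fejér weights. The upper bound makes this tight, so no optimality argument over polynomials is needed beyond the computation. Parameters then exist by the QSP realization step, without being computed explicitly.
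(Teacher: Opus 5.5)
Your architecture matches the paper's: free-fermion reduction to one-qubit blocks, averaging over the momenta killing every nonconstant Fourier mode when $p<k$, a Dirichlet/Fej\'er-kernel extremizer, and angles from QSP. (For even $n$ the momenta are the antiperiodic $\theta_j=\pi(2j-1)/(2k)$, but your degree argument is unaffected.) Citing the $\frac{2p+1}{2p+2}$ upper bound is acceptable; the paper also gets it almost for free from $|F(1)|=1$ and Cauchy--Schwarz. The gap is in the lower bound, at the realizability step you leave open, and the tool you propose for it does not work.

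First, no restriction on $\gamma$ is needed. Since $e^{-i\tilde\gamma h_c(\theta)}=R_Y(-\theta/2)R_X(\tilde\gamma)R_Y(\theta/2)$ and $R_Y(-\theta/2)=R_X(\pi/2)R_Y(\theta/2)R_X(-\pi/2)$, each QAOA layer is exactly two QSP steps. The signal is the fixed $R_Y(\theta/2)$, hence the variable $z=e^{i\theta/2}$, and the $X$-phases $\tilde\gamma_j-\pi/2$, $\tilde\beta_j+\pi/2$ are arbitrary. The restriction lives on the polynomial side instead. All phases are about a single axis, so in the $Y$-basis they are real rotation matrices. The reachable columns are therefore exactly pairs $(A,B)$ of \emph{real-coefficient} Laurent polynomials of degree $\le 2p-1$ with parity and $AA^\#+BB^\#=1$. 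The objective involves $F=\tfrac12(z^{-2}A+z^2A^\#-iB-iB^\#)$, not a matrix entry $P$ of a general $SU(2)$ product. Your characterization (complex $P,Q$ with $|P|^2+|Q|^2=1$ and parity) is the one for arbitrary $SU(2)$ phases, and it strictly overcounts.

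Consequently, Fej\'er--Riesz applied to $1-|F|^2\ge 0$ on the unit circle does not certify that a target is a QAOA amplitude. For example, with one signal step the amplitude is $\bra{+}R_X(\phi_1)R_Y(\theta/2)R_X(\phi_0)\ket{+}=e^{-i(\phi_0+\phi_1)}\cos(\theta/2)$. So $(z+z^{-1})/4$ is unreachable, even though it has modulus $\le 1$ on the circle and admits a Fej\'er--Riesz complement.

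For the target $F_p=\frac{\cos(\theta/2)}{p+1}U_p(\cos\theta)$, what is actually required is a condition \emph{off} the circle. Writing $y=(z-z^{-1})^2$, one needs $1-F_p^2=-y\,\bigl(w(y)^2+b(y)^2\bigr)$ with real polynomials $w,b$. This forces $\bigl((y/4+1)U_p(y/2+1)^2/(p+1)^2-1\bigr)/y$ to be nonnegative on the whole real line, not only on $[-4,0]$, which is the image of $|z|=1$. That is exactly where $U_p(y)\ge U_p(1)$ for $y>1$ is used; equivalently, $|F_p|\ge 1$ for real $|\cos(\theta/2)|\ge 1$, the single-axis QSP criterion.

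Even then you must set $z^{-2}A=(z-z^{-1})w(y)+F_p$ and $B=(z-z^{-1})b(y)$, and verify that the coefficients of $z^{2p+3}$ and $z^{2p+1}$ in $A$ cancel. This is necessary because $F_p$ has degree $2p+1$ while $A$ may only have degree $2p-1$. Without these two ingredients, your argument does not show that the Fej\'er-kernel polynomial is realized by any choice of angles.
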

\begin{theorem}[The QAOA sees the whole graph]
\label{thm:ring2}
When $n < 2p+2$,
    \begin{align*}
    \frac{1}{n} \cdot \max_{\vec{\gamma}, \vec{\beta}} \bra{\vec{\gamma}, \vec{\beta}} H_C \ket{\vec{\gamma}, \vec{\beta}} = \begin{cases}
        1 \quad &n \textnormal{ is even} \\
        \frac{n-1}{n} \quad &n \textnormal{ is odd}\,.
    \end{cases}
    \end{align*}
\end{theorem}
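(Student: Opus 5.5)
The upper bound is immediate. For every $\vec{\gamma},\vec{\beta}$ the state $\ket{\vec{\gamma},\vec{\beta}}$ is a superposition of computational basis states, and $H_C$ is diagonal in that basis, with each basis state $z$ having eigenvalue equal to the size of the cut defined by $z$. So $\bra{\vec{\gamma},\vec{\beta}} H_C \ket{\vec{\gamma},\vec{\beta}}$ is at most the maximum cut of $C_n$. That cut has size $n$ when $n$ is even and $n-1$ when $n$ is odd.

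For the lower bound I will exhibit parameters that reach this value. By monotonicity in depth (setting $\gamma_j=\beta_j=0$ for the extra layers), it suffices to work at the smallest depth $p_0$ with $n<2p_0+2$, namely $p_0=\lceil (n-1)/2\rceil$.

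The approach is the standard fermionization of the ring of disagrees.
\begin{itemize}[leftmargin=*]
\item The initial state $\ket{+}^{\otimes n}$ and both Hamiltonians commute with $\prod_i X_i$, so everything stays in the $+1$ parity sector.
\item After a Jordan--Wigner transformation in the $X$ basis, this sector corresponds to antiperiodic boundary conditions. The allowed momenta are $k\in\{\pi(2j+1)/n\}$.
\item Pairing $k$ with $-k$, the evolution splits into a tensor product over the $\lfloor n/2\rfloor$ pairs with $0<k<\pi$, each acting on an effective qubit. For odd $n$ there is in addition the unpaired mode $k=\pi$, which is frozen.
\item On the qubit for momentum $k$, the mixer acts as a $Z$-rotation. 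The cost term acts as a rotation about the axis $(\sin k,0,\cos k)$, up to constants.
\end{itemize}
Since each pair contributes additively, $\langle H_C\rangle$ reduces to a sum of single-qubit QAOA expectations. The maximum cut is achieved if and only if every paired qubit is driven exactly to the optimal eigenvector of its local cost term. The unpaired mode at $k=\pi$ for odd $n$ contributes a fixed amount, and this accounts exactly for the deficit $n-1$ instead of $n$. I would verify this bookkeeping carefully.

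The core step is to show that a single choice of $(\vec{\gamma},\vec{\beta})$ of depth $p_0$ steers all $m=\lfloor n/2\rfloor\le p_0$ qubits to their targets simultaneously. Following the abstract's suggestion, I rewrite the one-qubit QAOA as a quantum signal processing sequence in the variable $w=e^{ik}$. The final amplitudes are then a pair of Laurent polynomials $(P(w),Q(w))$ of degree at most $2p_0-1$ with $|P|^2+|Q|^2=1$ on the unit circle, satisfying the parity and symmetry constraints inherited from the circuit. Conversely, the QSP characterization, which I take from the paper's earlier recasting, guarantees that every admissible pair arises from some parameter choice. The problem thus becomes one of interpolation: find an admissible pair whose values at the $m$ points $w_j=e^{i\pi(2j+1)/n}$ equal the prescribed target states.

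I expect this interpolation to be the main obstacle. My first attempt is to choose $Q$ as a real-coefficient Laurent polynomial of the allowed parity and degree with $|Q|\le 1$ on the circle and prescribed values $Q(w_j)$. There are about $p_0\ge m$ free coefficients, so Lagrange-type interpolation is available. To keep $|Q|\le 1$ I would scale or add a multiple of a polynomial vanishing at all $w_j$. I then complete $P$ from $1-|Q|^2$ by Fej\'er--Riesz, and use the remaining freedom, namely the phase choice and the factor selection in Fej\'er--Riesz, to match the phases of $P(w_j)$. A cleaner alternative is a counting argument: by reflection symmetry the targets form a real one-parameter family per mode, so $m\le p_0$ real constraints are imposed on $2p_0$ parameters. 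A transversality or explicit construction at a single point should then settle existence.
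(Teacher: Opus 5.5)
\textbf{There is a genuine gap in the lower bound.} Your upper bound, the reduction to the minimal depth $p_0=\lceil (n-1)/2\rceil$ by padding with zero angles, and the fermionic splitting into $m=\lfloor n/2\rfloor$ one-qubit problems all agree with the paper, including the frozen mode that accounts for the deficit when $n$ is odd. The gap is the step you yourself flag as the main obstacle. You never show that a single admissible pair in $\mathcal{S}_{2p_0-1}$, i.e.\ a single choice of angles, sends every mode to its target, and none of the routes you sketch does this as stated.

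\begin{itemize}[leftmargin=*]
\item \emph{The counting is critical, not slack.} $m=p_0$ exactly for both parities, not merely $m\le p_0$. Success at a node means the complex amplitude $\braket{\theta_j}{\chi(\theta_j)}$ vanishes, which is two real conditions, not one: the target is a single point of the Bloch sphere. So you face $2p_0$ equations in $2p_0$ angles, and neither dimension counting nor transversality produces a solution.
\item \emph{The interpolation has no free room.} Real-coefficient odd Laurent polynomials of degree at most $2p_0-1$ in the half-angle variable $z=e^{i\theta/2}$ form a $2p_0$-dimensional space; this, not $e^{i\theta}$, is the variable in which the bound $2p-1$ holds. Prescribing complex values at $p_0$ nodes with $\theta_j\in(0,\pi)$ fixes the interpolant uniquely. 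Hence there is no nonzero polynomial vanishing at all nodes that you could add to enforce $|Q|\le 1$.
\item \emph{Fej\'er--Riesz cannot match the phases.} Once $|P|^2=1-|Q|^2$ is fixed, Fej\'er--Riesz leaves only discrete freedom (root flips, a sign). That cannot meet $p_0$ continuous phase conditions at the nodes. Letting the values $Q(z_j)$ vary just returns you to a square nonlinear system.
\end{itemize}

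\textbf{What is missing.} You need to prescribe the whole amplitude as a function of $\theta$, rather than values at nodes, and impose a symmetry that kills half of the conditions identically.

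The paper writes $F=\tfrac12\bigl(z^{-2}A+(z^{-2}A)^\#\bigr)-\tfrac{i}{2}\bigl(B+B^\#\bigr)$ and takes $B^\#=-B$, so $F$ is just the symmetric part of $z^{-2}A$.

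\begin{itemize}[leftmargin=*]
\item \emph{Even case} ($n=2k$, $p=k$). The explicit pair $A_p=\tfrac12(z^{3-2p}+z^{1-2p})$, $B_p=\tfrac12(z-z^{-1})$ gives $F=\cos(\theta/2)\cos(p\theta)$. This vanishes at every $\theta_j=\pi(2j-1)/(2k)$.
\item \emph{Odd case} ($n=2k+1$, $p=k$). The paper targets $F=\frac{\cos(\theta/2)}{2p+1}\bigl(U_p+U_{p-1}\bigr)(\cos\theta)$, which vanishes at $\theta_j=2\pi j/(2p+1)$. It realizes this as $z^{-2}A=(z-z^{-1})\,w(y)+F$ and $B=(z-z^{-1})\,b(y)$, with $y=(z-z^{-1})^2$. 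Here $w+ib$ comes from factoring the polynomial $F^2-1=y\,(w^2+b^2)$ in $y$.
\end{itemize}

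That factorization needs $(F^2-1)/y>0$ on the whole real $y$-line, not only on $[-4,0]$, the image of the unit circle. This is exactly where \Cref{lem:roots_generic} uses $V(y)\ge V(1)$ for $y>1$. A separate cancellation of the top two coefficients then gives $\deg A\le 2p-1$.

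Your Fej\'er--Riesz instinct points in this direction. But without a concrete target $F$ satisfying these positivity, symmetry and degree requirements, the lower bound, and hence the theorem, remains unproved.
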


\subsection{Proof outline}
The proof of \Cref{thm:ring} proceeds in three steps.

First, we use the free fermionic picture proven in~\cite{Wang2018Fermionic}. This shows that running the QAOA on $C_{2k}$ is equivalent to simultaneously running the QAOA on $k$ one-qubit systems. The angles are the same up to a factor of two and sign. Each one-qubit system uses a cost operator $X \cos \theta+ Z\sin \theta $, where system $j$ sets $\theta_j := \pi \cdot \frac{2j-1}{2k}$.

Second, we analyze the depth-$p$ QAOA on one qubit with cost operator $X \cos \theta + Z \sin \theta$. 
    We prove an \emph{expressibility theorem} which shows how the performance of the QAOA can be expressed as a function of $\theta$. With larger values of $p$ more complicated functions can be expressed. We show that choosing the angles $\vec{\gamma}, \vec{\beta}$ is equivalent to choosing a pair of Laurent polynomials of $z := e^{i \theta/2}$ whose degree is $\le 2p-1$. The performance of the QAOA is a fixed function of this pair of polynomials. Now we can look for optimal polynomial pairs instead of optimal angles.

Third, we suppose the QAOA cannot see the whole graph. This happens when $p < k$. On one qubit, the inner product of the final QAOA state with the $+1$-eigenstate of the cost operator has the form $\sum_{j=-p}^{p+1} c_{2j-1} z^{2j-1}$. The performance ratio of the QAOA simplifies to
    $$
    1 - \sum_{j=-p}^{p+1} |c_{2j-1}|^2\,.
    $$
    Using Cauchy-Schwarz, the sum is no smaller than $\frac{1}{2p+2}$. This bound is saturated by making all coefficients equal. We find a polynomial pair that achieves this condition. Curiously the optimal angles are not easy to write down.

So far we only have looked at cycles with even length. But the performance of the QAOA only depends on the local neighborhood, which is a path graph of $2p+2$ vertices. As a consequence this argument extends to cycles with odd length.

\vspace{-2.5em}
$$
$$
The proof of \Cref{thm:ring2} depends on whether $n$ is even or odd.

When $n$ is even we follow the proof  of \Cref{thm:ring} where $n = 2k$. 
Then $n < 2p + 2$ when $p \ge k$. 
We find a polynomial pair where the QAOA on a one-qubit system has error proportional to $\cos^2 (p \theta) \cos^2(\theta/2)$. Suppose $p = k$. Then for every one-qubit system $j$, $\cos^2(p \theta_j) = \cos^2(j \pi -\frac{\pi}{2}) = 0$. So the QAOA perfectly optimizes each one-qubit system. Then the QAOA perfectly optimizes $C_{2k}$ at $p = k$, and thus at any larger $p$. In this case it is easy to write down the optimal angles.

When $n$ is odd there is also a free fermionic picture proven in~\cite{Wang2018Fermionic} with slightly different angles. If $n = 2k + 1$ there are $k$ nontrivial free fermionic systems. As in the even case, we use an \textit{expressibility theorem} to look for a good pair of Laurent polynomials of $z := e^{i \theta/2}$ whose degree is $\le 2p- 1$. When $p = k$ we show the existence of a polynomial pair that optimizes every free fermionic system. Then the QAOA also optimizes $C_{2k+1}$ at any larger $p$. In contrast to the even case we did not find a simple closed form for the optimal angles.

\section{Reduction to one-qubit systems}
The following statement is from~\cite{Wang2018Fermionic}. For convenience, we provide a self-contained proof in \Cref{apx:fermion}.
\begin{claim}[\cite{Wang2018Fermionic}]
\label{claim:fermion}
Consider the one-qubit Hamiltonian
$$
h_c(\theta) = X \cos\theta + Z \sin\theta
$$
and one-qubit QAOA state
$$
\ket{\chi(\theta)} := e^{-i \tilde{\beta}_p X}e^{-i \tilde{\gamma_p} h_c(\theta)} \dots e^{-i \tilde{\beta}_1 X}e^{-i \tilde{\gamma_1} h_c(\theta)} \ket{+}\,,
$$
where
$\vec{\tilde{\gamma}} := (\tilde{\gamma}_1, \dots, \tilde{\gamma}_p) = (-\gamma_1, \dots, -\gamma_p)$ and $\vec{\tilde{\beta}} := (\tilde{\beta}_1, \dots, \tilde{\beta}_p) = (2\beta_1, \dots, 2\beta_p)$.
Then for $n = 2k$,
\begin{align*}
&\bra{\vec{\gamma}, \vec{\beta}} H_C \ket{\vec{\gamma}, \vec{\beta}} = k - \sum_{j=1}^k \bra{\chi(\theta_j)} h_c(\theta_j) \ket{\chi(\theta_j)} \,,\qquad \theta_j = \pi \cdot \frac{2j-1}{2k}\,.
\end{align*}
\end{claim}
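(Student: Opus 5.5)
We will prove \Cref{claim:fermion} by the standard free-fermion route: a Jordan--Wigner transformation followed by a Fourier transform, which splits the ring into $k$ decoupled two-mode problems, each acting on a two-dimensional invariant subspace.

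\textbf{Step 1: the ring Hamiltonians in fermionic language.}
Write $H_C = \frac{n}{2} - \frac12 \sum_i Z_i Z_{i+1}$. Since everything is symmetric, it is convenient to conjugate by a global Hadamard. This sends $H_M \to \sum_i Z_i$ and $Z_iZ_{i+1}\to X_iX_{i+1}$, and the initial state becomes $\ket{0}^{\otimes n}$. Apply Jordan--Wigner, $X_i X_{i+1} = (c_i^\dagger - c_i)(c_{i+1}^\dagger + c_{i+1})$ for $i<n$, and $Z_i = 1-2c_i^\dagger c_i$. The boundary term $X_nX_1$ acquires the parity operator $P=\prod_i Z_i$. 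The initial state has $P=+1$, and both Hamiltonians conserve parity. Hence we may restrict to the even sector, where the boundary condition becomes antiperiodic, $c_{n+1} = -c_1$.

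Now Fourier transform with the antiperiodic momenta $q = \pi(2j-1)/n$. For $n=2k$ these momenta pair up as $\pm q$ with $j=1,\dots,k$. Up to convention, $q$ coincides with $\theta_j$ (or $\pi-\theta_j$, which is harmless since $\sin$ is unchanged and the sign of $\cos$ can be absorbed by conjugating with $Z$). Both Hamiltonians then become sums over $j$ of quadratic operators acting on the modes $(c_q, c_{-q}^\dagger)$. In Anderson pseudospin form, $H_M$ contributes $2\sigma^z_q$ up to constants. The cost term $\sum_i X_iX_{i+1}$ contributes $2(\cos q\,\sigma^z_q + \sin q\,\sigma^y_q)$ up to constants and signs.

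\textbf{Step 2: reduce to one qubit per momentum pair.}
The initial vacuum factorizes over pairs, and each factor lies in the two-dimensional span of $\ket{0}$ and $c_q^\dagger c_{-q}^\dagger\ket{0}$. The QAOA unitary therefore factorizes into a product over $j$ of one-qubit evolutions
$$e^{-i\beta_\ell\cdot 2\sigma^z}\,e^{+i\gamma_\ell(\cos q\,\sigma^z+\sin q\,\sigma^y)},$$
where the sign comes from the $-\frac12 ZZ$ in $H_C$.

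Rotate the pseudospin basis, mapping $\sigma^z\to X$, $\sigma^y\to Z$, $\sigma^x\to \pm Y$, so that the initial pseudospin state $\ket{\uparrow}$ becomes $\ket{+}$. The mixer step is then $e^{-i(2\beta_\ell)X}$ and the cost step is $e^{-i(-\gamma_\ell)h_c(\theta)}$. This matches $\tilde\beta_\ell=2\beta_\ell$ and $\tilde\gamma_\ell=-\gamma_\ell$. One must check that this rotation is a proper unitary (an orientation-preserving axis relabeling), possibly composed with $\theta\to\pi-\theta$ or $\theta\to-\theta$. The set $\{\theta_j\}$ is invariant under these maps, so the final sum is unaffected.

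\textbf{Step 3: evaluate the energy.}
We have $\bra{\cdot}\sum_i Z_iZ_{i+1}\ket{\cdot} = \sum_j 2\langle h_c(\theta_j)\rangle$, so
$$\langle H_C\rangle = k - \sum_j \langle h_c(\theta_j)\rangle.$$

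\textbf{Main obstacle.}
The real difficulty is bookkeeping: getting every sign and factor of $2$ right so that the momentum set and the map to $X\cos\theta+Z\sin\theta$ come out exactly. This includes the Jordan--Wigner boundary sign, the antiperiodic versus periodic choice, the Bogoliubov pseudospin conventions, and the orientation of the final basis change. I would fix the conventions by checking $p=0$: the state $\ket{+}$ gives $\langle h_c\rangle=\cos\theta$, and $\sum_j\cos\theta_j=0$, which correctly yields $\langle H_C\rangle = k = n/2$. As a further check I would compare a single layer against the known $p=1$ value $\frac34 n$ at the optimal angles.
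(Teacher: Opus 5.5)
Your proposal is correct and is essentially the paper's proof: Jordan--Wigner, restriction to the even-parity sector, a Fourier transform over the half-integer-shifted momenta $\pi(2j-1)/(2k)$ (the paper realizes this through the twist $b_j=a_je^{-i\pi j/(2k)}$ and the pairing $(c_m,c_{1-m})$), and a pseudospin reduction. The only differences are your preliminary global Hadamard and that you leave the signs hedged where the paper fixes them; they do work out as you expect, since with your stated conventions the $\cos q$ term enters with a minus sign relative to the mixer pseudospin (absorbed by $q\to\pi-q$), and the per-pair constants are proportional to $\cos\theta_j$ and sum to zero. One small fix: $\{\theta_j\}$ is not invariant under $\theta\to-\theta$, but the one-qubit energy is, because conjugating by $X$ maps $h_c(\theta)$ to $h_c(-\theta)$ while commuting with the mixer and fixing $\ket{+}$; conjugating by $Z$, by contrast, flips both the mixer and the initial state, so drop that justification.
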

\Cref{claim:fermion} allows us to understand the ring of disagrees by studying the QAOA on a single qubit.

\section{Expressibility of QAOA on one qubit}
Consider the QAOA on one qubit with mixer Hamiltonian $X$, and cost Hamiltonian
$$
h_c(\theta) = X \cos \theta + Z \sin \theta\,.
$$
Let $\ket{\theta}$ and $\ket{\theta_\perp}$ be the $+1$- and $-1$-eigenstates of $h_c(\theta)$. The performance of the QAOA here is
$$
\bra{\chi(\theta)}h_c(\theta)\ket{\chi(\theta)} = 
|\braket{\theta}{\chi(\theta)}|^2 - |\braket{\theta_\perp}{\chi(\theta)}|^2 = 2|\braket{\theta}{\chi(\theta)}|^2 - 1\,.
$$
We show how this value can be expressed as a function of $\theta$. We use a technique from the theory of quantum signal processing. This requires a bit of setup.

\begin{definition}
    A Laurent polynomial is a polynomial in both $z$ and $z^{-1}$, i.e. $f(z) = \sum_{j = - \infty}^{\infty} c_j z^j$ where finitely many $c_j$ are nonzero. Its degree is the maximum $r$ where either $c_r$ or $c_{-r}$ is nonzero.
\end{definition}
We use the notation $\mathbb{R}[z,z^{-1}]$ for Laurent polynomials with real coefficients. In general $z$ can be on the complex plane. We also use the following notation:
$$
f^\#(z) := f(z^{-1})
$$
When $f \in \mathbb{R}[z,z^{-1}]$ and $|z| = 1$, $f^\#(z) = f(z^*) = \left( f(z)\right)^*$.

For each integer $r \ge 0$, we define a special set of \textit{pairs} of Laurent polynomials $(A,B)$:

\begin{definition}
For each integer $r \ge 0$, let $\mathcal{S}_r$ be
\begin{equation}
\begin{aligned}
  \mathcal{S}_r:=\{(A,B):{}& A,B\in\R[z,z^{-1}], \\
  &\deg A,\deg B \le r,\\
  &\forall m \in 2\Z + 1, z^{m-r} A(z) \textnormal{ and } z^{m-r} B(z)\textnormal{ have no constant term}, \\
  &A(z)A^\#(z)+B(z)B^\#(z)=1\}.
\end{aligned}
\label{eq:S-n}
\end{equation}
\end{definition}
The third condition of $\mathcal{S}_r$ specifies that the only nonzero coefficients of $A$ and of $B$ are $c_j$ where $j$ has the same parity as $r$.

The performance of the QAOA on one qubit is characterized by a polynomial pair in $\mathcal{S}_r$:
\begin{theorem}[Expressibility theorem]
\label{thm:express}
    Fix $p \ge 1$. The depth-$p$ QAOA on one qubit, with cost function $h_c(\theta)$ and angles $\vec{\tilde{\gamma}} = (\tilde{\gamma}_1, \dots, \tilde{\gamma}_p), \vec{\tilde{\beta}} = (\tilde{\beta}_1, \dots, \tilde{\beta}_p)$, has performance
    \begin{align*}
    \bra{\chi(\theta)}h_c(\theta)\ket{\chi(\theta)} &=  -1 + \frac{1}{2}\left| z^{-2} A(z)  + z^2 A^\#(z) - iB(z)  - i B^\#(z) \right|^2\,, \qquad z = e^{i \theta/2}\,,
    \end{align*}
    for some pair of polynomials $(A,B) \in \mathcal{S}_{2p-1}$. Conversely, every $(A,B) \in \mathcal{S}_{2p-1}$ corresponds to angles  $\vec{\tilde{\gamma}} = (\tilde{\gamma}_1, \dots, \tilde{\gamma}_p), \vec{\tilde{\beta}} = (\tilde{\beta}_1, \dots, \tilde{\beta}_p)$ where the depth-$p$ QAOA achieves the above performance.
\end{theorem}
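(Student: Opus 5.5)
The plan is to recast the one-qubit QAOA as a quantum signal processing (QSP) sequence in which $\theta$ enters only through a fixed signal rotation, and then use the standard QSP characterization of achievable polynomials together with its converse (layer stripping). First write $h_c(\theta) = R(\theta)\, X\, R(\theta)^\dagger$ with $R(\theta) = e^{\mp i\theta Y/2}$; the sign is to be fixed by direct computation. The entries of $R(\theta)$ are $\cos(\theta/2) = (z+z^{-1})/2$ and $\pm\sin(\theta/2) = \pm(z-z^{-1})/(2i)$, which are Laurent polynomials of degree $1$ in $z = e^{i\theta/2}$. Then
$$
\ket{\chi(\theta)} = e^{-i\tilde\beta_p X} R\, e^{-i\tilde\gamma_p X} R^\dagger \cdots e^{-i\tilde\beta_1 X} R\, e^{-i\tilde\gamma_1 X} R^\dagger \ket{+}\,,
$$
and the overlap is $\braket{\theta}{\chi(\theta)} = \bra{+} R^\dagger \ket{\chi(\theta)}$. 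Since $\ket{+}$ is an eigenvector of $X$, the first factor $e^{-i\tilde\gamma_1 X}$ acts only as a phase after $R^\dagger$ is applied.

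Next, group the product into the alternating form $R^\dagger\, e^{-i\tilde\beta_p X} R\, e^{-i\tilde\gamma_p X} R^\dagger \cdots$. Passing to the Hadamard basis (where $X$ is diagonal) gives exactly a QSP sequence: $z$-independent phase rotations interleaved with $2p+1$ signal factors $R^{\pm 1}$, each of degree $1$. By induction on the number of layers, the resulting $2\times 2$ unitary has the form
$$
\begin{pmatrix} P & -Q^\# \\ Q & P^\# \end{pmatrix}\,,
$$
with $P, Q$ Laurent polynomials of degree $\le 2p+1$, definite parity, and $PP^\# + QQ^\# = 1$. The quantity we need, $\braket{\theta}{\chi(\theta)}$, is one explicit entry (or a fixed combination of entries) of this matrix.

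To land in $\mathcal{S}_{2p-1}$, I would peel the outermost signal factors off explicitly. Define $A, B$ as the real and imaginary parts, in a suitable sense, of the inner block of degree $\le 2p-1$ formed by the $2p-1$ middle signal factors, so that the outer conjugation produces the prefactors $z^{\mp 2}$ and the symmetric combination $z^{-2}A + z^2A^\# - iB - iB^\#$. Reality of the coefficients should follow from a symmetry: complex conjugation combined with conjugation by $Y$ (or by $Z$) fixes $X$-rotations up to $\tilde\beta \mapsto -\tilde\beta$ and sends $z \mapsto z^{-1}$. I will use this symmetry to show that the coefficients can be taken real. The identity $AA^\# + BB^\# = 1$ comes from unitarity of the inner block, and the parity condition from the odd count of degree-one factors. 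Finally, with $|\braket{\theta}{\chi}|^2 = \frac{1}{2}\lvert\cdot\rvert^2 \cdot \frac{1}{2}$ up to normalization, combine this with $\bra{\chi}h_c\ket{\chi} = 2|\braket{\theta}{\chi}|^2 - 1$ to obtain the displayed formula. The constants will be checked on $p=1$ before doing the general case.

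For the converse, given $(A,B) \in \mathcal{S}_{2p-1}$, I would run the usual QSP layer-stripping argument. From $AA^\# + BB^\# = 1$, the top-degree coefficients of $A$ and $B$ must cancel in $AA^\#+BB^\#$, which forces them to lie on a common ray. Choose an $X$-rotation angle that rotates the pair so that multiplying by $R^\dagger$ (or $R$) lowers the degree by $2$ while preserving reality, parity and the norm identity. Induction down to degree $-1$ or $1$ (the base case, a single $R$) recovers all $\tilde\gamma_j, \tilde\beta_j$. This is the step I expect to be the main obstacle. Each QAOA layer contributes two signal factors but only two free angles, with the constraint that the same $X$ appears in both the cost and mixer conjugations. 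I therefore have to verify that the degree-lowering angle chosen at each step is consistent with this restricted alternating structure, and not just with a general QSP phase sequence. If a direct check fails, I would first try showing that the real-coefficient and parity conditions in $\mathcal{S}_{2p-1}$ are exactly what make the general QSP phases collapse to this form.
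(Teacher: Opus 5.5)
The overall framework matches the paper's: you use quantum signal processing, peel off the two outer signal factors, and obtain the pair in $\mathcal{S}_{2p-1}$ from the inner block with $2p-1$ signal factors. But the converse has a genuine gap, located where you suspected, and your fallback addresses the wrong issue. The real-coefficient condition only encodes that every phase is a rotation about the single axis $X$, which the QAOA already satisfies. It says nothing about the alternation of $R$ and $R^\dagger$. The missing idea is that this alternation costs nothing: $R^\dagger = XRX$, and $X$ is, up to a global phase, the rotation $R_X(\pi/2)$. Concretely, $R_Y(-\theta/2)=R_X(\pi/2)R_Y(\theta/2)R_X(-\pi/2)$, so each layer becomes $R_X(\tilde\beta_j+\pi/2)R_Y(\theta/2)R_X(\tilde\gamma_j-\pi/2)R_Y(\theta/2)$. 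The substitution $\phi_{2j-1}=\tilde\gamma_j-\pi/2$, $\phi_{2j}=\tilde\beta_j+\pi/2$ is then a bijection from QAOA angles onto \emph{unrestricted} phase sequences $(\phi_1,\dots,\phi_{2p})\in\R^{2p}$ with the single signal $R_Y(\theta/2)$. With this, the converse is just the standard QSP characterization, and no consistency check with the alternating structure remains.

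Your layer-stripping sketch also needs repair. For $(A,B)\in\mathcal{S}_r$, one phase together with one degree-one signal factor lowers the degree by exactly $1$, not $2$. The induction therefore bottoms out at degree $0$, a constant pair with $A^2+B^2=1$. The condition coming from the $z^{2r}$ coefficient of $AA^\#+BB^\#$ is orthogonality, $a_ra_{-r}+b_rb_{-r}=0$, not a common ray. It is this orthogonality that lets a single rotation kill the top coefficient of $A$ and the bottom coefficient of $B$ at once. All of this, including reality of the coefficients, becomes automatic if you work in the $Y$-eigenbasis rather than the Hadamard basis. There $R_Y(\theta/2)=\mathrm{diag}(z^{-1},z)$ and $R_X(\phi)$ is a real rotation matrix, so no antiunitary-symmetry argument is needed. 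The inner block is $\bigl(\begin{smallmatrix}A & -B^\# \\ B & A^\#\end{smallmatrix}\bigr)$, the two outer diagonal factors supply the $z^{\mp 2}$, and $\ket{+}=(1,i)^T/\sqrt{2}$ gives the displayed formula directly. Finally, your remark that $e^{-i\tilde\gamma_1 X}$ ``acts only as a phase'' is false. It acts on $R^\dagger\ket{+}$, which is not an $X$-eigenvector, and $\tilde\gamma_1$ is one of the $2p$ phases you need. Discarding it would leave too few parameters to reach all of $\mathcal{S}_{2p-1}$.
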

We will use the rest of this section to prove \Cref{thm:express}. 
\subsection{Rewriting the QAOA unitary}
We start by rewriting the QAOA to look more like quantum signal processing. For any Pauli $P$ and angle $\phi$, let
$$
R_P(\phi) := e^{-i \phi P}\,.
$$
\begin{claim}
The unitary associated with the depth-$p$ QAOA on one qubit, with cost function $h_c(\theta)$ and angles $\vec{\tilde{\gamma}} = (\tilde{\gamma}_1, \dots, \tilde{\gamma}_p), \vec{\tilde{\beta}} = (\tilde{\beta}_1, \dots, \tilde{\beta}_p)$, can be written as
\begin{align*}
    R_X(\phi_{2p}) R_Y(\theta/2) R_X(\phi_{2p-1}) R_Y(\theta/2) \dots R_X(\phi_{1}) R_Y(\theta/2)\,,
\end{align*}
where $\phi_{2j} = \tilde{\beta}_j + \frac{\pi}{2}$, $\phi_{2j-1} = \tilde{\gamma}_j - \frac{\pi}{2}$.
\end{claim}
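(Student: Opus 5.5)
The plan is a direct algebraic rewrite. The idea is to write each cost-layer exponential as an $X$-rotation conjugated by a $Y$-rotation of angle $\theta/2$. The unwanted reversed $Y$-rotations are then converted into forward ones by absorbing a $\pm\frac{\pi}{2}$ shift into the neighbouring $X$-rotations.

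\textbf{Step 1: conjugation identity.} For any angle $a$, expand $e^{-iaY} X e^{iaY}$ using $[Y,X] = -2iZ$ and the anticommutation of $Y$ with $X$ and $Z$. This gives
\[
R_Y(a)\, X\, R_Y(a)^\dagger = X\cos 2a - Z \sin 2a .
\]
Taking $a = -\theta/2$ yields $h_c(\theta) = R_Y(-\theta/2)\, X\, R_Y(\theta/2)$. Exponentiating, for every $\gamma$,
\[
e^{-i\gamma h_c(\theta)} = R_Y(-\theta/2)\, R_X(\gamma)\, R_Y(\theta/2).
\]
Also, $e^{-i\tilde\beta X} = R_X(\tilde\beta)$ by definition.

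\textbf{Step 2: substitution.} Substituting into the depth-$p$ unitary gives
\[
R_X(\tilde\beta_p) R_Y(-\tfrac{\theta}{2}) R_X(\tilde\gamma_p) R_Y(\tfrac{\theta}{2}) \cdots R_X(\tilde\beta_1) R_Y(-\tfrac{\theta}{2}) R_X(\tilde\gamma_1) R_Y(\tfrac{\theta}{2}).
\]
This already alternates $X$- and $Y$-rotations and ends on the right with $R_Y(\theta/2)$, as required. The only problem is that half of the $Y$-rotations have angle $-\theta/2$.

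\textbf{Step 3: flipping the reversed rotations.} Since $X$ anticommutes with $Y$, we have $XR_Y(\phi)X = R_Y(-\phi)$. Moreover $R_X(\pi/2) = -iX$ and $R_X(-\pi/2) = iX$, so the phases cancel and
\[
R_Y(-\theta/2) = R_X(\tfrac{\pi}{2})\, R_Y(\theta/2)\, R_X(-\tfrac{\pi}{2})
\]
holds exactly, with no global phase. Insert this for each $R_Y(-\theta/2)$ and merge adjacent $X$-rotations via $R_X(a)R_X(b) = R_X(a+b)$. The $X$-rotation to the left of the $j$-th reversed rotation becomes $R_X(\tilde\beta_j + \frac{\pi}{2}) = R_X(\phi_{2j})$. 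The one to its right becomes $R_X(\tilde\gamma_j - \frac{\pi}{2}) = R_X(\phi_{2j-1})$. This produces exactly the claimed product.

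\textbf{Where care is needed.} The main obstacle is purely a matter of sign and phase bookkeeping: the orientation of the conjugation in Step 1, and confirming that the phases from $X = iR_X(\pi/2)$ cancel so that the identity holds exactly rather than up to a global phase. A global phase would anyway not affect $\bra{\chi(\theta)}h_c(\theta)\ket{\chi(\theta)}$. I would verify the Step 1 sign by checking $a=\pi/4$, where the formula should give $-Z$. Because the decomposition is an exact operator identity, it applies unchanged to the initial state $\ket{+}$.
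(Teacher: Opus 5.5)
Your proof is correct and follows the paper's route: write $e^{-i\tilde\gamma h_c(\theta)} = R_Y(-\theta/2)R_X(\tilde\gamma)R_Y(\theta/2)$, then replace each $R_Y(-\theta/2)$ by $R_X(\pi/2)R_Y(\theta/2)R_X(-\pi/2)$ and merge the adjacent $X$-rotations. The only cosmetic difference is that you derive $h_c(\theta) = R_Y(-\theta/2)\,X\,R_Y(\theta/2)$ from Pauli anticommutation, whereas the paper checks it by explicit $2\times 2$ matrix multiplication.
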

\begin{proof}
We study one layer of the QAOA unitary:
$$
e^{-i \tilde{\beta} X} e^{-i \tilde{\gamma} h_c(\theta)} = R_X(\tilde{\beta}) e^{-i \tilde{\gamma} h_c(\theta)}
$$
We rewrite the rotation around $h_c(\theta)$ as
\begin{align*}
    h_c(\theta) = 
\begin{pmatrix}
    \sin \theta & \cos \theta \\
    \cos \theta & - \sin \theta
\end{pmatrix}
=
\begin{pmatrix}
    \cos \frac{\theta}{2} & \sin\frac{\theta}{2} \\
    -\sin \frac{\theta}{2}& \cos\frac{\theta}{2}
\end{pmatrix}
\begin{pmatrix}
    0 & 1 \\ 
    1 & 0
\end{pmatrix}
\begin{pmatrix}
    \cos \frac{\theta}{2} & -\sin \frac{\theta}{2} \\
    \sin \frac{\theta}{2} & \cos \frac{\theta}{2}
\end{pmatrix}
=
    R_Y(-\theta/2) X R_Y(\theta/2)
\end{align*}
\begin{align*}
    e^{-i \tilde{\gamma} h_c(\theta)} = R_Y(-\theta/2) R_X(\tilde{\gamma}) R_Y(\theta/2)
\end{align*}
Then one layer of the QAOA unitary has the form
$$
R_X(\tilde{\beta}) R_Y(-\theta/2) R_X(\tilde{\gamma}) R_Y(\theta/2)\,.
$$
We rewrite 
$$
R_Y(-\theta/2) = (-iX) R_Y(\theta/2) (+iX) =  R_X(\pi/2) R_Y(\theta/2) R_X(-\pi/2) \,.
$$
Then one layer of the QAOA unitary has the form
$$
R_X(\tilde{\beta}) \Big( R_X(\pi/2) R_Y(\theta/2) R_X(-\pi/2) \Big) R_X(\tilde{\gamma}) R_Y(\theta/2)
=
R_X(\tilde{\beta} + \pi/2) R_Y(\theta/2) R_X(\tilde{\gamma} - \pi/2) R_Y(\theta/2)\,.
$$
The claim follows after expanding each layer of the QAOA unitary in this form.
\end{proof}
\subsection{QAOA unitary from Laurent polynomials}
We show how the matrix elements of the QAOA unitary are related to Laurent polynomials. Let 
$$
z = e^{i \theta/2}\,.
$$
We write $R_X(\phi)$ and $R_Y(\theta/2)$ in the $Y$-basis (explicitly, mapping $X$ to $Y$, $Y$ to $Z$, and $Z$ to $X$):
\begin{align*}
    C(\phi) &:= R_X(\phi) = \begin{pmatrix}
    \cos(\phi) & - \sin(\phi) \\
    \sin(\phi) & \cos(\phi) 
\end{pmatrix}\,, \\
D &:= R_Y(\theta/2) = \begin{pmatrix} z^{-1} & 0 \\ 0 & z \end{pmatrix}\,.
\end{align*}
We start by considering matrices of the form $C(\phi_r) D C(\phi_{r-1}) \dots D C(\phi_0)$. When $r = 2p-1$, this matches a depth-$p$ QAOA unitary without the last $D$ unitary.

In this $Y$-basis, we show how the matrix entries are described by real Laurent polynomials in $\mathcal{S}_r$. This comes directly from the study of quantum signal processing~\cite{LowChuang2017,Haah2019Product}. We state the claims here and defer the proofs to \Cref{apx:qsp}:

\begin{claim}
\label{thm:state-theorem}
For every $r\ge0$,
\begin{equation*}
\left\{
  C(\phi_r)D C(\phi_{r-1})D\cdots D C(\phi_0)
  \begin{pmatrix}1\\0\end{pmatrix}
  :\phi_0,\dots,\phi_r\in\R
\right\}
=
\left\{
  \begin{pmatrix}A(z)\\B(z)\end{pmatrix}
  :(A,B)\in\mathcal{S}_r
\right\}\,.
\end{equation*}
\end{claim}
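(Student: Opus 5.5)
We prove the two inclusions separately, the forward one by induction on $r$ and the reverse one by peeling off layers, as in the standard quantum signal processing argument~\cite{LowChuang2017,Haah2019Product}.

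\emph{Forward inclusion ($\subseteq$), by induction on $r$.}
For $r=0$ the vector is $C(\phi_0)(1,0)^T=(\cos\phi_0,\sin\phi_0)^T$. These are constant real Laurent polynomials with $A A^\#+BB^\#=\cos^2\phi_0+\sin^2\phi_0=1$, and the parity condition holds trivially. So the vector lies in $\mathcal{S}_0$.

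For the inductive step, suppose $(A,B)\in\mathcal{S}_{r-1}$. Then
\[
C(\phi)D\begin{pmatrix}A\\B\end{pmatrix}=\begin{pmatrix}\cos\phi\, z^{-1}A-\sin\phi\, zB\\ \sin\phi\, z^{-1}A+\cos\phi\, zB\end{pmatrix}.
\]
We check the four conditions for $\mathcal{S}_r$:
\begin{itemize}[label={}]
\item Multiplying by $z^{\pm1}$ raises the degree by at most one, so both entries have degree at most $r$.
\item It also shifts every exponent by one, so the parity of the exponents flips from that of $r-1$ to that of $r$.
\item The coefficients stay real, since $\phi$ is real.
\item The normalization $AA^\#+BB^\#=1$ is preserved. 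The cross terms cancel because $(z^{-1}A)(zA^\#)=AA^\#$ and $(zB)(z^{-1}B^\#)=BB^\#$. Equivalently, $CD$ is unitary on $|z|=1$, and an identity of Laurent polynomials that holds on the circle holds identically.
\end{itemize}

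\emph{Reverse inclusion ($\supseteq$), also by induction on $r$.}
For $r=0$, a pair $(A,B)\in\mathcal{S}_0$ consists of real constants with $A^2+B^2=1$. So $(A,B)=(\cos\phi_0,\sin\phi_0)$ for some $\phi_0$.

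For $r\ge1$, take $(A,B)\in\mathcal{S}_r$ and write $A=\sum a_j z^j$, $B=\sum b_j z^j$, with only exponents $j\equiv r \pmod 2$ and $|j|\le r$. The plan is to choose $\phi$ so that
\[
D^{-1}C(-\phi)\begin{pmatrix}A\\B\end{pmatrix}=\begin{pmatrix}z(\cos\phi A+\sin\phi B)\\ z^{-1}(-\sin\phi A+\cos\phi B)\end{pmatrix}
\]
lies in $\mathcal{S}_{r-1}$. This requires two coefficient conditions:
\begin{itemize}[label={}]
\item the $z^{r}$ coefficient of $\cos\phi A+\sin\phi B$ vanishes, i.e.\ $\cos\phi\, a_r+\sin\phi\, b_r=0$;
\item the $z^{-r}$ coefficient of $-\sin\phi A+\cos\phi B$ vanishes, i.e.\ $-\sin\phi\, a_{-r}+\cos\phi\, b_{-r}=0$.
\end{itemize}

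\emph{Main obstacle: compatibility of these two conditions.}
A single $\phi$ must satisfy both, which requires the vectors $(a_r,b_r)$ and $(-b_{-r},a_{-r})$ to be parallel. That is, we need $a_ra_{-r}+b_rb_{-r}=0$. This is exactly the top coefficient, at $z^{2r}$, of $AA^\#+BB^\#$. For $r\ge1$ it must vanish, because $AA^\#+BB^\#=1$.
\begin{itemize}[label={}]
\item If $(a_r,b_r)\ne0$, choose $\phi$ orthogonal to it. Then the second condition follows from parallelism.
\item Otherwise choose $\phi$ from $(a_{-r},b_{-r})$, or choose it arbitrarily if both vectors vanish.
\end{itemize}
Everything is real, so $\phi$ is real.

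\emph{Checking the peeled pair.}
With this $\phi$, the new pair has degree at most $r-1$ and exponents of parity $r-1$. Its coefficients are real. It satisfies the normalization because $(CD)^{-1}$ is unitary on the circle, by the same computation as before. So the new pair lies in $\mathcal{S}_{r-1}$, and the induction hypothesis supplies $\phi_0,\dots,\phi_{r-1}$. Setting $\phi_r=\phi$ completes the reverse inclusion.

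The only nonroutine point in the whole argument is this compatibility of the two vanishing conditions, and it is settled by reading off the top coefficient of the normalization identity.
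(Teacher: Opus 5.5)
Your proposal is correct and follows essentially the same route as the paper. The forward inclusion tracks degree, parity and normalization, using unitarity of $C(\phi)$ and $D$ on $|z|=1$. The reverse inclusion peels off one layer with $\phi$ chosen orthogonal to $(a_r,b_r)$, its compatibility is secured by the vanishing $z^{2r}$ coefficient $a_ra_{-r}+b_rb_{-r}$ of $AA^\#+BB^\#$, and the degenerate cases are handled the same way. One small quibble: the identities $(z^{-1}A)(zA^\#)=AA^\#$ and $(zB)(z^{-1}B^\#)=BB^\#$ only account for the diagonal terms, whereas the cross terms $\pm\cos\phi\sin\phi\,(z^{-2}AB^\#+z^{2}BA^\#)$ cancel because of the sign pattern of the rotation $C(\phi)$; your unitarity remark already covers this.
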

\begin{corollary}
\label{cor:state-theorem}
    For every $p\ge 1$, and in the $Y$-basis,
\begin{equation*}
\bigg\{
  C(\phi_{2p})D C(\phi_{2p-1})D\cdots D C(\phi_1) D
  :\phi_1,\dots,\phi_{2p}\in\R
\bigg\}
=
\left\{
  \begin{pmatrix}z^{-1} A(z) & -zB^\#(z) \\z^{-1} B(z) & z A^\#(z)\end{pmatrix}
  :(A,B)\in\mathcal{S}_{2p-1}
\right\}\,.
\end{equation*}
\end{corollary}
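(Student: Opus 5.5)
Apply \Cref{thm:state-theorem} with $r = 2p-1$ and then multiply on the right by $D$ to recover the whole unitary. Write $U' := C(\phi_{2p})DC(\phi_{2p-1})D\cdots DC(\phi_1)$; this is exactly the product in \Cref{thm:state-theorem} with $r = 2p-1$ after relabelling $\phi_j \mapsto \phi_{j-1}$. Hence, as the angles range over $\R^{2p}$, the first column $U'(1,0)^T$ ranges over exactly $\{(A(z),B(z))^T : (A,B)\in\mathcal{S}_{2p-1}\}$. Every $C(\phi)$ is a real rotation of determinant $1$, and $D$ has determinant $1$, so $U'$ lies in $SU(2)$ for $|z|=1$. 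The second column of an $SU(2)$ matrix is determined by the first: if the first column is $(a,b)^T$, the second is $(-b^*,a^*)^T$. On $|z|=1$, real coefficients give $A^* = A^\#$, so
$$
U' = \begin{pmatrix} A & -B^\# \\ B & A^\#\end{pmatrix}.
$$

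To make this an identity of Laurent polynomials rather than only on the unit circle, I will argue that the entries of $U'$ are Laurent polynomials in $z$. Both sides agree on the unit circle, which has infinitely many points, so they agree identically. Alternatively, one can run a direct induction: if $M$ has the form $\begin{pmatrix} A & -B^\# \\ B & A^\#\end{pmatrix}$, then so do $C(\phi)DM$ and $DM$ with $D^\#$ structure. That check is straightforward, since $D$ maps $(A,B)\mapsto(z^{-1}A,zB)$ and the second column $(-B^\#,A^\#)\mapsto(-z^{-1}B^\#,zA^\#)$, which is still of the form $(-(zB)^\#,(z^{-1}A)^\#)$; and $C(\phi)$ is real, so it commutes with $\#$.

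Right-multiplying by $D=\mathrm{diag}(z^{-1},z)$ scales the first column by $z^{-1}$ and the second by $z$. This gives
$$
\begin{pmatrix} z^{-1}A & -zB^\# \\ z^{-1}B & zA^\#\end{pmatrix},
$$
which is the claimed form. Both inclusions follow, because the map from angles to $(A,B)$ is onto $\mathcal{S}_{2p-1}$ by \Cref{thm:state-theorem}, and the full matrix is a function of $(A,B)$ alone. The only delicate step is justifying the second column. This means either the $SU(2)$ argument together with the extension from the unit circle to all $z$, or the short induction. I would present the induction, since it avoids any analytic-continuation remark.
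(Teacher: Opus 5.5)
Your proposal is correct and follows the paper's route: you apply \Cref{thm:state-theorem} with $r=2p-1$ to get the first column, identify the second column as $(-B^\#, A^\#)^T$ because the product lies in $SU(2)$ on $|z|=1$ and Laurent polynomials agreeing on infinitely many points agree identically, and then right-multiply by $D$. Your alternative induction is also valid: it checks that left multiplication by $C(\phi)$ and by $D$ preserves the form $\begin{pmatrix} A & -B^\# \\ B & A^\#\end{pmatrix}$, which gives a purely algebraic substitute for the paper's unitarity-plus-extension step.
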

\subsection{Putting it together}
We now have the tools to prove the expressibility theorem.
\begin{proof}[Proof of \Cref{thm:express}]
  Fixing $\vec{\tilde{\gamma}},\vec{\tilde{\beta}}$ is equivalent to fixing $\phi_1, \dots, \phi_{2p}$ to some real values. We study
$$
\braket{\theta}{\chi(\theta)} = \bra{\theta}  C(\phi_{2p})D C(\phi_{2p-1})D\cdots D C(\phi_1) D \ket{+}\,.
$$
Recall that $h_c(\theta) = R_Y(-\theta/2) X R_Y(\theta/2)$. So $\ket{\theta} = R_Y(-\theta/2) \ket{+}$, and so
$$
\braket{\theta}{\chi(\theta)}
=
\bra{+} D  C(\phi_{2p})D C(\phi_{2p-1})D\cdots D C(\phi_1) D \ket{+}\,.
$$
By \Cref{cor:state-theorem}, the matrix below  can be equivalently written with some pair $(A,B) \in \mathcal{S}_{2p-1}$:
$$
 D C(\phi_{2p})D C(\phi_{2p-1})D\cdots D C(\phi_1) D = \begin{pmatrix}
   z^{-1}  & 0 \\
   0  & z
\end{pmatrix}
\begin{pmatrix}
   z^{-1} A(z)  & -z B^\#(z) \\
   z^{-1} B(z)  & z A^\#(z)
\end{pmatrix}
=
\begin{pmatrix}
   z^{-2} A(z) & -B^\#(z) \\
   B(z)  & z^2 A^\#(z)
\end{pmatrix}
$$
Writing $\ket{+}$ in the $Y$-basis, we have $\ket{+ } = \frac{1}{\sqrt{2}}\begin{pmatrix} 1 \\  i\end{pmatrix}$. For this pair $(A,B) \in \mathcal{S}_{2p-1}$,
$$
\braket{\theta}{\chi(\theta)} = \frac{1}{2} \left( z^{-2} A(z)  + z^2 A^\#(z) - iB(z)  - i B^\#(z) \right)\,,
$$
Then the performance of the depth-$p$ QAOA has the form
\begin{align*}
\bra{\chi(\theta)} h_c(\theta) \ket{\chi(\theta)} =  2|\braket{\theta}{\chi(\theta)}|^2 - 1
=
-1+\frac{1}{2}\left| z^{-2} A(z)  + z^2 A^\#(z) - iB(z)  - i B^\#(z) \right|^2 
\,.
\end{align*}
The converse direction holds because \Cref{cor:state-theorem} is an equivalence.
\end{proof}
\Cref{thm:express} means that for the QAOA on the ring of disagrees, we can search over all angles $\vec{\gamma}, \vec{\beta}$, or equivalently over all Laurent pairs $(A,B) \in \mathcal{S}_{2p-1}$.
\begin{corollary}
\label{cor:qaoa_ring_equals_expression}
Suppose $n = 2k$. For $p \ge 1$,
\begin{align*}
 \frac{1}{2k} \cdot \max_{\vec{\gamma}, \vec{\beta}} \bra{\vec{\gamma}, \vec{\beta}} H_C \ket{\vec{\gamma}, \vec{\beta}} 
 = \max_{(A,B) \in \mathcal{S}_{2p-1}} \left[ 1 -  \frac{1}{4k} \sum_{j=1}^k 
 \left| z_j^{-2} A(z_j)  + z_j^2 A^\#(z_j) - iB(z_j)  - i B^\#(z_j) \right|^2
 \right] \,,
 \end{align*}
 where
 $$
 z_j =e^{i \theta_j / 2} = \exp[ i \frac{\pi}{2} \cdot \frac{2j-1}{2k}]\,.
 $$
 \end{corollary}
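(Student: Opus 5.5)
This corollary follows by chaining \Cref{claim:fermion} with \Cref{thm:express}, together with a check that the change of angles between the $n$-qubit and one-qubit QAOA is a bijection.

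First, fix $\vec{\gamma},\vec{\beta}$. By \Cref{claim:fermion},
\[
\bra{\vec{\gamma}, \vec{\beta}} H_C \ket{\vec{\gamma}, \vec{\beta}} = k - \sum_{j=1}^k \bra{\chi(\theta_j)} h_c(\theta_j) \ket{\chi(\theta_j)}.
\]
Here every $\ket{\chi(\theta_j)}$ uses the \emph{same} one-qubit angles $\tilde{\gamma}_i=-\gamma_i$ and $\tilde{\beta}_i=2\beta_i$. Next apply \Cref{thm:express} to these fixed one-qubit angles. It produces a single pair $(A,B)\in\mathcal{S}_{2p-1}$ that does not depend on $\theta$, because the pair comes from \Cref{cor:state-theorem} applied to the fixed phases $\phi_1,\dots,\phi_{2p}$, and only $D$ depends on $\theta$. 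With this one pair, the performance formula holds simultaneously for every $\theta$, in particular for every $\theta_j$ with $z_j=e^{i\theta_j/2}$.

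I will stress this point in the write-up, since the theorem statement reads ``for some pair'' and one must make sure the pair is uniform in $\theta$. Writing $E_j := z_j^{-2} A(z_j) + z_j^2 A^\#(z_j) - iB(z_j) - i B^\#(z_j)$, substitution gives
\[
\bra{\vec{\gamma}, \vec{\beta}} H_C \ket{\vec{\gamma}, \vec{\beta}} = k-\sum_{j=1}^k\left(-1+\tfrac12|E_j|^2\right) = 2k-\tfrac12\sum_{j=1}^k |E_j|^2.
\]
Dividing by $2k$ gives $1-\frac{1}{4k}\sum_j|E_j|^2$. Hence the left-hand side of the corollary is at most the right-hand side.

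For the reverse inequality, take any $(A,B)\in\mathcal{S}_{2p-1}$. The converse part of \Cref{thm:express} gives one-qubit angles $\vec{\tilde\gamma},\vec{\tilde\beta}$ realizing this pair, again uniformly in $\theta$. Setting $\gamma_i=-\tilde\gamma_i$ and $\beta_i=\tilde\beta_i/2$ yields $n$-qubit angles, and the same computation shows they achieve exactly the bracketed value. The maps between $(\vec\gamma,\vec\beta)$, $(\vec{\tilde\gamma},\vec{\tilde\beta})$ and $(\phi_1,\dots,\phi_{2p})$ are all bijections of $\R^{2p}$, so the two maximization problems range over the same set of values.

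Both suprema are attained. On the angle side this is by compactness and periodicity; alternatively, the equality of value sets transfers attainment. So writing $\max$ is justified. The only real obstacle is the uniformity of $(A,B)$ in $\theta$, which I handle by pointing to the $\theta$-independence of the phases in \Cref{cor:state-theorem}.
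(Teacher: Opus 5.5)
Your proposal is correct and follows the same route as the paper's proof. Both combine \Cref{claim:fermion} with \Cref{thm:express} and sum $-1+\tfrac12|E_j|^2$ over $j$. Your extra points simply make explicit what the paper's two-line proof leaves implicit: the pair $(A,B)$ is fixed by the $\theta$-independent phases $\phi_1,\dots,\phi_{2p}$, the angle reparametrizations are bijections, and the maxima are attained by periodicity.
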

 \begin{proof}
 Using \Cref{claim:fermion}, we have
 $$
 \bra{\vec{\gamma}, \vec{\beta}} H_C \ket{\vec{\gamma}, \vec{\beta}} = k - \sum_{j=1}^k  \bra{\chi(\theta_j)} h_c(\theta_j) \ket{\chi(\theta_j)} \,,\qquad \theta_j = \pi \cdot \frac{2j-1}{2k}\,.
 $$
 The corollary follows after using \Cref{thm:express} to expand each $\bra{\chi(\theta_j)} h_c(\theta_j) \ket{\chi(\theta_j)}$ as a Laurent polynomial in $z_j = e^{i \theta_j / 2}$.
 \end{proof}
Given a set of angles $\vec{\gamma}, \vec{\beta}$, one can construct the Laurent pair $(A,B)$ using \Cref{thm:state-theorem}. But the reverse direction is also possible: given the Laurent pair  $(A,B) \in \mathcal{S}_{2p-1}$, there are efficient algorithms to compute the angles $\phi_1, \dots, \phi_{2p}$~(e.g. \cite{Chao2020Angles, Ying2022Stable}), and thus the angles $\vec{\gamma}, \vec{\beta}$.

\section{When the QAOA cannot see the whole graph}
Let $F$ be the Laurent polynomial in $z$ that represents the amplitude $\braket{\theta}{\chi(\theta)}$:
\begin{equation}
F(z) := \frac{1}{2} \left( z^{-2} A(z) + z^2 A^\#(z) - iB(z) - i B^\#(z) \right) \,.
\end{equation}
Then the performance ratio of the QAOA on $C_{2k}$ is always of the form
$$
1 - \frac{1}{k} \sum_{j=1}^k |F(e^{i \theta_j/2)}|^2\,.
$$
In the depth-$p$ QAOA, $F$ is a Laurent polynomial of degree at most $2p+1$. Moreover, $F^\# = F$. So
$$
F(z) = \sum_{m=0}^{2p+1} c_{2p+1-2m} \cdot z^{2p+1-2m} = \sum_{m=0}^{p} c_{2m+1} (z^{2m+1} + z^{-(2m+1)}) \,.
$$
When the QAOA cannot see the whole graph, the performance ratio simplifies. First, consider $|F|^2$:
\begin{align*}
|F(e^{i\theta_j/2})|^2 &= \left|2 \sum_{m=0}^p c_{2m+1} \cos((2m+1) \cdot \theta_j /2)\right|^2 
\\
&= 4\sum_{m,m'=0}^p c_{2m+1} c_{2m'+1}^* 
\cos((2m+1) \cdot \theta_j /2)
\cos((2m'+1) \cdot \theta_j /2)
\\
&= 4\sum_{m,m'=0}^p \frac{1}{2} c_{2m+1} c_{2m'+1}^* \left( \cos(\theta_j \cdot (m - m')) + \cos(\theta_j \cdot (m + m' + 1)) \right)\,.
\end{align*}
The angles $\{\theta_1, \dots, \theta_k \}$ are equally spaced at $\pi/k$ increments around the half circle. As a result, many $\cos$ terms average to zero:
\begin{align*}
    \sum_{j=1}^k \cos(\theta_j \cdot c) = \sum_{j=1}^k \cos(\frac{2j-1}{2k} \cdot c \pi) = \mathbbm{1}_{c = 0\text{ mod } 2k} \cdot k\cos(\frac{c \pi}{2k}) 
\end{align*}
We track the nonzero terms when the QAOA cannot see the whole graph, i.e. $p < k$:
\begin{itemize}
    \item For the first term, this only occurs when $(m - m') = 0 \mod 2k$; this can only occur at $m = m'$.
    \item For the second term, this only occurs when $(m + m' + 1)= 0 \mod 2 k$. This never occurs when $p < k$, so this term contributes zero.
\end{itemize}
Altogether, when $p < k$, the performance ratio of the depth-$p$ QAOA takes the form
$$
1 - 2 \sum_{m=0}^p |c_{2m+1}|^2 = 1 - \sum_{m=0}^{2p+1}|c_{2p+1-2m}|^2 \,.
$$

\subsection{Upper bound}
We first use Cauchy-Schwarz to prove an upper bound on the QAOA's performance. There are other proofs of this fact  (e.g.~\cite{mbeng2019quantum}), but our proof will help us identify the lower bound.
\begin{claim}
\label{claim:qaoa_not_wholegraph_upper}
    Suppose $n = 2k$. For any $k > p \ge 0$, 
    $$
   \frac{1}{2k} \cdot \max_{\vec{\gamma}, \vec{\beta}} \bra{\vec{\gamma}, \vec{\beta}} H_C \ket{\vec{\gamma}, \vec{\beta}} \le \frac{2p+1}{2p+2}\,.
    $$
\end{claim}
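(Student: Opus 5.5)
The plan is to combine the simplified form of the performance ratio derived just above with a single normalization constraint on the coefficients of $F$, and then apply Cauchy--Schwarz. By \Cref{cor:qaoa_ring_equals_expression} and the computation preceding this claim, for $k > p \ge 1$ every choice of angles corresponds to some $(A,B) \in \mathcal{S}_{2p-1}$. For such a pair the ratio equals
$$
1 - \sum_{m=0}^{2p+1} |c_{2p+1-2m}|^2,
$$
where $F(z) = \sum_{m=0}^{2p+1} c_{2p+1-2m} z^{2p+1-2m}$ has exactly $2p+2$ (possibly zero) coefficients. It therefore suffices to show $\sum_m |c_{2p+1-2m}|^2 \ge \frac{1}{2p+2}$ for every admissible $F$.

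The normalization comes from evaluating $F$ at $z = 1$, i.e.\ $\theta = 0$. There $F(1) = \sum_m c_{2p+1-2m}$ is just the sum of all coefficients. I will show $|F(1)| = 1$ in one of two ways.

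\emph{Physically:} at $\theta = 0$ we have $h_c(0) = X$, so every layer of $\ket{\chi(0)}$ is a rotation about $X$. Hence $\ket{\chi(0)}$ equals $\ket{+}$ up to a global phase, and $|\braket{\theta}{\chi(\theta)}| = 1$.

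\emph{Algebraically:} at $z = 1$ we have $A^\#(1) = A(1)$ and $B^\#(1) = B(1)$, both real. Then
$$
F(1) = \tfrac12\bigl(2A(1) - 2iB(1)\bigr) = A(1) - iB(1),
$$
and the condition $AA^\# + BB^\# = 1$ evaluated at $z=1$ gives $A(1)^2 + B(1)^2 = 1$, so $|F(1)| = 1$.

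With this in hand, Cauchy--Schwarz over the $2p+2$ coefficients gives
$$
1 = \Bigl|\sum_m c_{2p+1-2m}\Bigr|^2 \le (2p+2)\sum_m |c_{2p+1-2m}|^2 .
$$
So the ratio is at most $1 - \frac{1}{2p+2} = \frac{2p+1}{2p+2}$, and maximizing over angles preserves the bound.

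The case $p = 0$ falls outside \Cref{cor:qaoa_ring_equals_expression}, which assumes $p \ge 1$, and I will treat it directly. The state is $\ket{+}^{\otimes n}$, so each edge is cut with probability $\frac12$ and the ratio is exactly $\frac12 = \frac{2\cdot 0+1}{2\cdot 0+2}$.

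The main obstacle is identifying a usable constraint on the $c$'s, since the Laurent-pair condition is quadratic and does not bound the coefficients of $F$ in any obvious way. Evaluating at the single point $z = 1$, where the cost and mixer commute, is the key observation, and I would verify it carefully through the algebraic route rather than rely on intuition. The remaining steps are routine given the earlier simplification, which already used $p < k$ to kill the cross terms.
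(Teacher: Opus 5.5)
Your proposal is correct and follows the paper's proof essentially step for step: you handle $p=0$ directly, then for $p\ge 1$ combine the simplified ratio $1-\sum_m |c_{2p+1-2m}|^2$ with the normalization $|F(1)|=1$, which holds because at $\theta=0$ every layer is an $X$-rotation fixing $\ket{+}$, and finish with Cauchy--Schwarz over the $2p+2$ coefficients. Your extra algebraic check, $F(1)=A(1)-iB(1)$ with $A(1)^2+B(1)^2=1$, is a valid alternative derivation of the same normalization straight from the definition of $\mathcal{S}_{2p-1}$.
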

\begin{proof}
The claim is trivial at $p = 0$, since 
$$
\frac{1}{2k} \cdot \max_{\vec{\gamma}, \vec{\beta}} \bra{\vec{\gamma}, \vec{\beta}} H_C \ket{\vec{\gamma}, \vec{\beta}} = \frac{1}{2k} \bra{+}^{\otimes 2k} H_C \ket{+}^{\otimes 2k} = \frac{1}{2}\,.
$$
We now study $p \ge 1$. On one qubit, the mixer Hamiltonian at $\theta = 0$ matches the cost Hamiltonian, and so the QAOA layers are rotations about the $X$-axis. Since the initial state is $\ket{+}$, the state remains $\ket{+}$ up to a global phase. Then the squared amplitude $|\braket{\theta}{\chi(\theta)}|^2 = |\braket{+}{+}|^2 = 1$. Since 
$$
|F(z)|^2 = |\braket{\theta}{\chi(\theta)}|^2\,,
$$
we have
$$
1 = |F(e^{i \cdot 0/2})|^2 = \left| \sum_{m=0}^{2p+1}c_{2p+1-2m} \right|^2 \le (2p+2) \sum_{m=0}^{2p+1} |c_{2p+1-2m}|^2\,,
$$
by Cauchy-Schwarz. Then
\[
    \sum_{m=0}^{2p+1} |c_{2p+1-2m}|^2 \ge\frac{1}{2p+2}\,.
\]
So the performance ratio of the depth-$p$ QAOA is
\begin{align*}
1 - \sum_{m=0}^{2p+1} |c_{2p+1-2m}|^2 \le 1 - \frac{1}{2p+2}\,.\tag*{\qedhere}
\end{align*}
\end{proof}

\subsection{Lower bound}
We now show that the QAOA can meet this upper bound. The only inequality in \Cref{claim:qaoa_not_wholegraph_upper} is Cauchy-Schwarz, which is saturated when $F$ has coefficients of equal value.
\begin{claim}
\label{claim:qaoa_not_whole_graph_lower}
     Suppose $n = 2k$. For any $k > p \ge 0$,
     \begin{align*}
    \frac{1}{2k} \cdot \max_{\vec{\gamma}, \vec{\beta}} \bra{\vec{\gamma}, \vec{\beta}} H_C \ket{\vec{\gamma}, \vec{\beta}} \ge \frac{2p+1}{2p+2}\,.
    \end{align*}
\end{claim}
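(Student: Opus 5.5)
The plan is to exhibit an explicit pair $(A,B)\in\mathcal{S}_{2p-1}$ for which the amplitude polynomial $F$ has all of its $2p+2$ coefficients equal to $c:=\frac{1}{2p+2}$. By \Cref{thm:express} (converse direction) and \Cref{claim:fermion}, such a pair is realized by some angles $\vec{\gamma},\vec{\beta}$. The computation preceding \Cref{claim:qaoa_not_wholegraph_upper} then gives performance ratio $1-(2p+2)c^2=\frac{2p+1}{2p+2}$ whenever $p<k$. The case $p=0$ is already the value $\frac12$ computed in \Cref{claim:qaoa_not_wholegraph_upper}, so I assume $p\ge1$.

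The first step is to read off the linear conditions on the coefficients. Write $A=\sum_j a_j z^j$ and $B=\sum_j b_j z^j$, with $j$ odd and $|j|\le 2p-1$. The coefficient of $z^m$ in $z^{-2}A+z^2A^\#$ is $a_{m+2}+a_{2-m}$. The coefficient of $z^m$ in $B+B^\#$ is $b_m+b_{-m}$. So it suffices to impose
\begin{align*}
a_{m+2}+a_{2-m}&=2c \quad\text{for every odd } |m|\le 2p+1\,, & b_m+b_{-m}&=0\,.
\end{align*}
The second condition says $B^\#=-B$. The first condition pairs the index $j$ with $4-j$. The two smallest indices $j=-(2p-1)$ and $j=-(2p-3)$ have partners outside the allowed range, so they are forced to $a_j=2c$. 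For the paired indices I would take $a_j=c$ for every $j$ from $-(2p-5)$ up to $2p-1$. The key observation is then $\sum_j|a_j|=2\cdot 2c+(2p-2)c=(2p+2)c=1$. By the triangle inequality this gives $|A(z)|\le1$ on $|z|=1$, with equality at $z=1$.

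The second step is to complete $A$ to a pair in $\mathcal{S}_{2p-1}$. On the unit circle the trigonometric polynomial $1-AA^\#$ is nonnegative, has real coefficients, and contains only even powers of $z$. By Fejér–Riesz (applied in $w=z^2$ and then multiplied by a suitable power of $z$ to fix parity), there is a real Laurent polynomial $B$ of degree $\le 2p-1$, supported on odd powers, with $BB^\#=1-AA^\#$. This $B$ satisfies every constraint of $\mathcal{S}_{2p-1}$.

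The main obstacle is the extra requirement $B^\#=-B$. Such a $B$ is purely imaginary on the circle, so $BB^\#=-B^2$. The condition therefore becomes $1-|A|^2=-B^2$: the nonnegative function $1-|A|^2$ must be the square of a real odd trigonometric polynomial in $\theta$. That is, every zero of it on the circle must have even multiplicity, and it must factor compatibly. I would attack this in two ways.
\begin{itemize}
\item First, exploit the freedom in the choice of $A$. The constraints fix only the sums $a_j+a_{4-j}$, so one can redistribute the pair values, keeping signs nonnegative so that $\sum_j|a_j|=1$ still holds. The aim is to make $1-|A|^2$ a perfect square, e.g.\ by arranging $A$ so that $1-|A|^2$ factors as $\sin^2(\theta/2)\,G(\theta)^2$. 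The natural starting point is the known $p=1$ solution; if a pattern emerges, extend it to general $p$ by induction.
\item Second, relax the requirement. Cauchy–Schwarz equality needs only that all $c_{2p+1-2m}$ be equal, and the symmetry $F^\#=F$ already forces the coefficient pairs to match. So it may suffice to allow $B+B^\#\ne0$ while choosing $A$ to compensate. One would then use the sign and root-flipping freedom in Fejér–Riesz to cancel the imaginary contribution.
\end{itemize}
I expect this compatibility step, not the counting, to be where the real work lies.
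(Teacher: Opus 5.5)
\textbf{There is a genuine gap.} Your reduction is correct and matches the paper's: it suffices to exhibit $(A,B)\in\mathcal{S}_{2p-1}$ with $F=F_p$. The $p=0$ case and the final arithmetic are also fine. But the existence of such a pair is the whole content of the claim, and your proposal does not supply it.

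Moreover, the explicit $A$ you chose already fails at $p=2$. (It works at $p=1$, where it coincides with the paper's pair.) At $p=2$ you have $A=\frac13(z^{-3}+z^{-1})+\frac16(z+z^{3})$. A direct computation gives, with $x=\cos\theta$,
\[
1-|A|^2=\tfrac{1}{18}\sin^2(\theta/2)\,\bigl(16x^2+32x+34\bigr)\,.
\]
A real odd-parity $B$ with $B^\#=-B$ equals $2i\sin(\theta/2)P(\cos\theta)$ on the circle, for some real polynomial $P$. So $-B^2=1-|A|^2$ would force $16x^2+32x+34=72P(x)^2$. This is impossible, because the left side has no real root.

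Your second bullet cannot rescue this. Since $\deg B\le 2p-1$, the extreme coefficient of $F$ is $c_{-(2p+1)}=\frac12 a_{-(2p-1)}\in\R$. So if all coefficients of $F$ are equal, they are all real, which forces $b_m+b_{-m}=0$ for every $m$. Thus Cauchy--Schwarz equality requires $B^\#=-B$ exactly. Your plan to ``look for a pattern from $p=1$ and induct'' is therefore not yet an argument.

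\textbf{The missing idea} is to stop fixing $A$ first and to treat the free part of $A$ jointly with $B$. The freedom you noticed (only the sums $a_j+a_{4-j}$ are fixed) is exactly the odd part of $z^{-2}A$ under $z\mapsto z^{-1}$. The paper writes
\[
z^{-2}A=F_p+(z-z^{-1})\,w(y)\,,\qquad B=(z-z^{-1})\,b(y)\,,\qquad y=(z-z^{-1})^2\,,
\]
with $w,b$ real polynomials. On the circle $F_p$ is real while both odd parts are purely imaginary. So the normalization becomes
\[
y\,(w^2+b^2)=f(y):=-1+(y/4+1)\,U_p^2(y/2+1)/U_p^2(1)\,,
\]
which is a sum-of-two-squares problem rather than a perfect-square problem.

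It is solvable because $f$ has a simple root at $0$ and no other real roots. This uses $|U_p|\le p+1$ on $[-1,1]$ and $U_p\ge p+1$ on $(1,\infty)$; see \Cref{lem:roots_generic}. Hence $f(y)=y\,q(y)q^*(y)$, and one takes $w+ib=q$. A further degree check shows that the $z^{2p+3}$ and $z^{2p+1}$ coefficients of $A$ cancel, so $\deg A\le 2p-1$ (\Cref{claim:find_poly_pair_from_cheby}).

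Finally, the nonnegativity and $\sum_j|a_j|=1$ constraints you imposed are unnecessary. The bound $|A|\le 1$ on the circle comes out of this construction automatically.
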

\begin{proof}
The claim is trivial at $p = 0$, since 
$$
\frac{1}{2k} \cdot \max_{\vec{\gamma}, \vec{\beta}} \bra{\vec{\gamma}, \vec{\beta}} H_C \ket{\vec{\gamma}, \vec{\beta}} = \frac{1}{2k} \bra{+}^{\otimes 2k} H_C \ket{+}^{\otimes 2k} = \frac{1}{2}\,.
$$
We now study $p \ge 1$. Consider the polynomial
$$
F_p(z) = \frac{1}{2p+2} \sum_{m=0}^p (z^{2m+1} + z^{-(2m+1)}) = \frac{1}{2p+2} \sum_{m=0}^{2p+1} z^{2p+1-2m}\,.
$$
If $F = F_p$, the performance ratio of the depth-$p$ QAOA is
$$
1 -  \sum_{m=0}^{2p+1} |c_{2p+1-2m}|^2 = 1 - \frac{2p+2}{(2p+2)^2} = \frac{2p+1}{2p+2}\,.
$$
In our convention $z = e^{i \theta/2}$, $F_p$ has a simpler form:
$$
F_p(z)
= \frac{1}{2p+2}\frac{z^{2p+2} - z^{-2p-2}}{z - z^{-1}}
= \frac{1}{2p+2}\frac{\sin((p+1)\theta)}{\sin(\theta/2)}
= \frac{\cos(\theta/2)}{p+1} U_p(\cos \theta)\,,
$$
where $U_p$ is the $p^\text{th}$ Chebyshev polynomial of the second kind. $U_p$ has real coefficients, and satisfies
\begin{align*}
        &U_p(1) = p + 1\,, &  &U_p(1) > 0\,, & U_p(y) \ge U_p(1) \text{ if } y > 1\,, &  &|U_p(y)| \le U_p(1) \text{ if } {-1} \le y \le 1\,.
\end{align*}
Then there exists a polynomial pair $(A,B) \in \mathcal{S}_{2p-1}$ that sets $F$ equal to $F_p$.  The proof is somewhat technical; it is deferred to \Cref{claim:find_poly_pair_from_cheby} in \Cref{apx:optimal_poly}.
\end{proof}

\subsection{System size independence}
So far we have only looked at cycles with even length ($n = 2k$). We now handle the case when $n$ is odd. Recall that the QAOA is a local algorithm. This is because of a lightcone argument. The chance any edge is cut is a function only of its local neighborhood. When the QAOA cannot see the whole graph the local neighborhood around each edge is always the same regardless of $n$.
\begin{fact}[\cite{Farhi2014QAOA}]
\label{fact:scale_invariance}
   For any choice of $\vec{\gamma}, \vec{\beta}$, the value
    \begin{align*}
         \frac{1}{n} \cdot  \bra{\vec{\gamma}, \vec{\beta}} H_C \ket{\vec{\gamma}, \vec{\beta}} 
    \end{align*}
    is fixed for all $n \ge 2p + 2$.
\end{fact}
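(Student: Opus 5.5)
We prove \Cref{fact:scale_invariance} with a lightcone argument. The plan is to write the expectation as a sum of single-edge terms and show that each term depends only on a path of $2p+2$ vertices around that edge, independently of $n$. By linearity,
$$
\bra{\vec{\gamma}, \vec{\beta}} H_C \ket{\vec{\gamma}, \vec{\beta}} = \sum_{i=1}^n \frac{1}{2}\Big(1 - \bra{+}^{\otimes n} U^\dagger Z_i Z_{i+1} U \ket{+}^{\otimes n}\Big),
$$
where $U$ is the depth-$p$ QAOA unitary. The cycle has a rotation symmetry $i \mapsto i+1$, and both $H_C$, $H_M$ and $\ket{+}^{\otimes n}$ are invariant under it. So every edge term is equal, and $\frac1n\bra{\vec{\gamma}, \vec{\beta}} H_C \ket{\vec{\gamma}, \vec{\beta}} = \frac12(1 - \langle Z_1 Z_2\rangle)$. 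It therefore suffices to show that $\langle Z_1 Z_2\rangle$ does not depend on $n$ once $n \ge 2p+2$.

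We compute the Heisenberg-evolved operator $U^\dagger Z_1 Z_2 U$ layer by layer from the outside in.
\begin{itemize}
\item Conjugating by $e^{-i\beta H_M}$ factors into single-qubit conjugations. It acts only on qubits in the current support and does not enlarge it.
\item Conjugating by $e^{-i\gamma H_C}$ also factors, since all the terms $Z_iZ_{i+1}$ commute. Terms whose two qubits both lie outside the support commute with the operator and cancel. The only surviving terms touch the support, so each such layer enlarges the support by at most one vertex on each side.
\end{itemize}
After $p$ layers, the support is the path of vertices $1-p,\dots,2+p$. This path has $2p+2$ vertices.

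When $n \ge 2p+2$, these $2p+2$ vertices are distinct on the cycle. The edges used along the way are those of a path: the edge closing the cycle would join two vertices of the support only if $n < 2p+2$. Hence the evolved operator is the same formal expression in local Paulis for every such $n$. It is built from the same path Hamiltonian, with the same angles, on a relabeled set of $2p+2$ qubits.

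The expectation in $\ket{+}^{\otimes n}$ of an operator supported on those qubits equals its expectation in $\ket{+}^{\otimes (2p+2)}$, because $\ket{+}^{\otimes n}$ is a product state. So $\langle Z_1Z_2\rangle$ is independent of $n$, which proves the fact.

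The main obstacle is the support-growth bookkeeping: we must confirm that the inner cost layers only add edges of a path and never the wraparound edge. We handle this with an induction on the number of layers peeled. The hypothesis is that after $\ell$ layers, the operator is supported on vertices $1-\ell,\dots,2+\ell$ and was generated only by cost terms on edges inside that interval. The inequality $2+p-(1-p)+1 = 2p+2 \le n$ is exactly what ensures these vertices are distinct, so that no edge joins the two ends of the interval.
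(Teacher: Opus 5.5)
Your proposal is correct and is the same lightcone argument the paper uses. The paper gives no detailed proof: it cites \cite{Farhi2014QAOA} and only remarks that each edge's cut probability depends on a local neighborhood, a path on $2p+2$ vertices, that is identical for all $n \ge 2p+2$. Your rotation-symmetry reduction and support-growth induction make that remark precise.

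One justification needs fixing. At $n = 2p+2$ the ends $1-p$ and $2+p$ of the final support \emph{are} adjacent on the cycle, so ``no edge joins the two ends of the interval'' is false in exactly that boundary case. The correct reason is already in your first bullet. When the innermost cost layer is applied, the support is $\{2-p,\dots,1+p\}$, and that closing edge has no endpoint in it, so it commutes and cancels. The surviving edges then form a path on $2p+2$ distinct vertices precisely when $n \ge 2p+2$.
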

\Cref{fact:scale_invariance} implies that Claims \ref{claim:qaoa_not_wholegraph_upper} and \ref{claim:qaoa_not_whole_graph_lower} apply to cycles with odd length. This proves \Cref{thm:ring}.

\section{When the QAOA sees the whole graph}
\Cref{thm:ring2} follows by \Cref{claim:qaoa_whole_graph} (when $n$ is even) and \Cref{claim:qaoa_odd_whole_graph} (when $n$ is odd). We prove each case.
\subsection{When $n$ is even}
During the preparation of this document we found that \cite{Rabinovich2025Overparametrization} also proved \Cref{claim:qaoa_whole_graph}.
\begin{claim}
\label{claim:qaoa_whole_graph}
    Suppose $n = 2k$. For any $p \ge k \ge 2$,
   \begin{align*}
    \frac{1}{2k} \cdot \max_{\vec{\gamma}, \vec{\beta}} \bra{\vec{\gamma}, \vec{\beta}} H_C \ket{\vec{\gamma}, \vec{\beta}} = 1\,.
    \end{align*}
\end{claim}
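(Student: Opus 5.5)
The upper bound is immediate, since $\bra{\vec{\gamma},\vec{\beta}}H_C\ket{\vec{\gamma},\vec{\beta}}\le \|H_C\|=2k$. For the lower bound it suffices to treat $p=k$. Any depth-$k$ QAOA is a depth-$p$ QAOA for $p\ge k$: set the extra $\tilde\gamma_j=\tilde\beta_j=0$, which is a valid choice of angles. Equivalently, in polynomial language, $\mathcal{S}_{2k-1}$ embeds into $\mathcal{S}_{2p-1}$ by multiplying $(A,B)$ by an even power of $z$ and adjusting the $z^{\pm 2}$ factors. The first description is cleaner, so I will use it. By \Cref{claim:fermion} and \Cref{cor:qaoa_ring_equals_expression}, the value equals $1$ exactly when $|F(z_j)|^2=1$ for every $j=1,\dots,k$, where $z_j=e^{i\theta_j/2}$ and $\theta_j=\pi\frac{2j-1}{2k}$. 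So the plan is to exhibit a single admissible $F$, coming from some $(A,B)\in\mathcal{S}_{2k-1}$, that has modulus one at all $k$ points $\theta_j$.

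Following the outline, I aim for a pair whose one-qubit error $1-|F(e^{i\theta/2})|^2$ is proportional to $\cos^2(p\theta)\cos^2(\theta/2)$. At $p=k$ this vanishes at every $\theta_j$, because $k\theta_j=(2j-1)\pi/2$. A natural candidate with real symmetric coefficients and degree $2p+1$ is
$$|F|^2 = 1-\cos^2(p\theta)\cos^2(\theta/2), \qquad\text{e.g. } F(z)=\cos(\theta/2)\cos(p\theta)\ \text{ or a combination like } \tfrac12\big(\cos((p+\tfrac12)\theta)+\cdots\big).$$
Concretely, I would try $F(z)$ equal to a linear combination of $\cos(\theta/2)$-type odd harmonics, normalised so that $F(1)=1$. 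This normalisation is forced, since $\theta=0$ always gives amplitude $1$.

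Rather than guess $F$, the cleaner route is to pick the angles directly. At $\theta=0$ the optimum is trivially reached, so I would look for a simple explicit angle schedule, since the outline says these are easy to write down. The schedule I would try is constant angles, e.g. $\tilde\gamma_j=-\pi/4$ and $\tilde\beta_j=\pi/4$, or a linear ramp. For such a schedule I would compute the one-qubit product $R_X(\phi_{2p})R_Y(\theta/2)\cdots R_Y(\theta/2)$ in closed form. With constant $\phi$'s the layer is a fixed $SU(2)$ element $W(\theta)$ raised to the $p$-th power, so $W^p$ is expressed through Chebyshev polynomials $U_{p-1}$ of $\tfrac12\operatorname{tr} W(\theta)$. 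With the right constants, $\tfrac12\operatorname{tr}W=\cos\theta$ or similar, and $W^p$ then contains $\cos(p\theta)$ and $\sin(p\theta)$ explicitly. I would then read off $|\braket{\theta}{\chi(\theta)}|^2 = 1-c\cos^2(p\theta)\cos^2(\theta/2)$ using the formula in the proof of \Cref{thm:express}.

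The main obstacle is finding the right constant angles (or pair $(A,B)$) so that the error factorises as claimed. If the constant-angle guess fails, I would fall back on the existence route used in \Cref{claim:find_poly_pair_from_cheby}. That route means specifying the target $F$, checking $|F|\le 1$ on the unit circle together with the required parity and symmetry, and invoking the same completion argument to produce $(A,B)\in\mathcal{S}_{2p-1}$. The hypothesis $k\ge2$ is there only so that the cycle and the fermionic reduction are well defined.
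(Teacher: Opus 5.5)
Your upper bound via $\|H_C\|=2k$ and the reduction to $p=k$ by padding with zero angles are fine. However, the criterion you then aim for is inverted. By \Cref{cor:qaoa_ring_equals_expression} the ratio equals $1-\frac1k\sum_{j=1}^k|F(z_j)|^2$, where $F(z)=\braket{\theta}{\chi(\theta)}$ is the amplitude on the $+1$-eigenstate of $h_c(\theta)$. Since $H_C=k-\sum_j h_c(\theta_j)$ in the fermionic picture, a perfect cut needs $F(z_j)=0$ for every $j$. By contrast, $|F(z_j)|=1$ for all $j$ describes the state that cuts no edges. The same slip makes your target $|F|^2=1-\cos^2(p\theta)\cos^2(\theta/2)$ incompatible with your own (correct) normalisation $F(1)=1$. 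It is also incompatible with your example $F=\cos(\theta/2)\cos(p\theta)$. That example is the right function, precisely because the one-qubit error is $|F|^2$, not $1-|F|^2$.

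The more serious gap is that the proposal never exhibits what the claim is about: an explicit pair in $\mathcal{S}_{2p-1}$, or explicit angles.

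\emph{The constant schedules you suggest do not work.} For constant angles, $\frac12\operatorname{tr}W=\cos\tilde\beta\cos\tilde\gamma-\sin\tilde\beta\sin\tilde\gamma\cos\theta$. This equals $\pm\cos\theta$ only if $|\sin\tilde\beta|=|\sin\tilde\gamma|=1$. Then $W=\pm R_Y(\theta)$ and $|F|=|\cos((p+\frac12)\theta)|$, so at $p=k$ one gets $|F(z_j)|=\sin\frac{(2j-1)\pi}{4k}\neq 0$, giving ratio exactly $\frac12$. Your guess $\tilde\gamma=-\pi/4$, $\tilde\beta=\pi/4$ already gives ratio $\frac78$ at $k=p=2$.

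\emph{The fallback is not carried out, and you misstate its hypothesis.} Going through \Cref{claim:find_poly_pair_from_cheby} would succeed with $V=T_p$, the Chebyshev polynomial of the first kind. Indeed $\frac12(z+z^{-1})\,T_p\bigl(\frac{z^2+z^{-2}}{2}\bigr)=\cos(\theta/2)\cos(p\theta)$, and $T_p$ meets all three hypotheses. But you neither choose $V$ nor verify anything. Moreover, $|F|\le1$ on the unit circle is not sufficient for the completion. Let $O$ be the part of $z^{-2}A$ that is odd under $z\mapsto z^{-1}$. Any admissible pair producing a real-coefficient $F$ has $B^\#=-B$ and $F^2-1=O^2+B^2$. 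Hence $|F(z)|\ge1$ for all real $z\neq0$, which is exactly what the hypothesis $V(y)\ge V(1)$ for $y>1$ secures. For instance, $F=\frac{9}{16}(z+z^{-1})-\frac{1}{16}(z^3+z^{-3})$ has $F(1)=1$ and $|F|\le 1$ on the circle, yet vanishes at $z=\sqrt2+\sqrt3$, so no pair realises it.

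\emph{What the paper does instead.} It supplies the missing object directly. The pair $A_p=\frac12(z^{3-2p}+z^{1-2p})$, $B_p=\frac12(z-z^{-1})$ satisfies $A_pA_p^\#+B_pB_p^\#=\frac14(2+z^2+z^{-2})+\frac14(2-z^2-z^{-2})=1$. It gives $F=\frac14(z+z^{-1})(z^{2p}+z^{-2p})=\cos(\theta/2)\cos(p\theta)$. This vanishes at every $\theta_j$ when $p=k$, because $k\theta_j=(2j-1)\pi/2$.
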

\begin{proof}
    Fix $p$. Consider the Laurent polynomials
    \begin{equation}
  A_p(z)=\frac12\left(z^{3-2p}+z^{1-2p}\right)\,,
  \qquad
  B_p(z)=\frac12(z-z^{-1})\,.
  \label{eq:A-B-sparse-cos}
\end{equation}
We verify that $(A_p, B_p) \in \mathcal{S}_{2p-1}$. They have real coefficients, powers of $z$ with odd parity, and degree at most $2p-1$. It remains to check the normalization:
$$
A_p A_p^\# = \frac{1}{4} \left( 2 + z^2 + z^{-2} \right)\,, \qquad
B_p B_p^\# = -B_p^2 = \frac{1}{4} \left( 2 - z^2 - z^{-2} \right)\,.
$$
Then  $A_p A_p^\# + B_p B_p^\# = 1$. 

By \Cref{cor:qaoa_ring_equals_expression}, the performance ratio of the QAOA on $C_{2k}$ is at least
$$
 \frac{1}{2k} \cdot \max_{\vec{\gamma}, \vec{\beta}} \bra{\vec{\gamma}, \vec{\beta}} H_C \ket{\vec{\gamma}, \vec{\beta}} 
 \ge 1 -  \frac{1}{4k} \sum_{j=1}^k 
 \left| z_j^{-2} A_p(z_j)  + z_j^2 A_p^\#(z_j) - iB_p(z_j)  - i B_p^\#(z_j) \right|^2 \,,
$$
where $z_j = e^{i \theta_j / 2} = \exp[i \frac{\pi}{2} \cdot \frac{2j-1}{2k}]$. The below expression simplifies when $z = e^{i \theta/2}$:
\begin{align*}
F(z) &= \frac{1}{2} \left( z^{-2} A_p(z) + z^2 A_p^\#(z) - iB_p(z) - iB_p^\#(z) \right)
\\
&= \frac{1}{4} \left( z^{1-2p} + z^{-1-2p} + z^{2p-1} + z^{2p+1} - 2i B_p(z) + 2i B_p(z)\right)
\\
&= \frac{1}{4}(z+z^{-1})(z^{2p} + z^{-2p}) \\
&= \cos (\theta/2) \cos(p \theta)\,.
\end{align*}
So the performance ratio of the QAOA is at least
$$
1 - \frac{1}{k} \sum_{j=1}^k  \cos^2(\theta_j/2)\cos^2(p \theta_j) = 1 - \frac{1}{k} \sum_{j=1}^k \cos^2(\frac{\pi}{2} \cdot \frac{2j-1}{2k}) \cos^2(p \pi \cdot \frac{2j-1}{2k})\,. 
$$
Suppose $k  = p$. Then for all $1 \le j \le k$,
$$
\cos^2(p \pi \cdot \frac{2j-1}{2k}) = \cos^2(j \pi - \frac{\pi}{2}) = 0\,.
$$
So the performance ratio is exactly $1$. When $p \ge k$, we can use the associated angles for the first $k$ layers of the QAOA, and set the remaining angles to $0$.
\end{proof}
Since in expectation the QAOA finds cuts of maximum size there can be no fluctuation above this value. So the QAOA always finds a cut of maximum size. 
In this case, there is a simple choice of angles to optimize $C_{2k}$.
\begin{claim}
\label{claim:wholegraph_chosenangles}
    For any $p = k \ge 2$, the QAOA finds a perfect cut of $C_{2k}$ with angles 
    \begin{align*}
    \phi_j = \begin{cases}
    0 & 1 \le j \le p-1 \\
    \pi/4 & j = p \\
    -\pi/4 & j = p+1 \\
    0 & p+2 \le j \le 2p
    \end{cases}
    \end{align*}
    where 
    $$
    (-\gamma_1, 2\beta_1, \dots, -\gamma_p, 2\beta_p) = (\tilde{\gamma}_1, \tilde{\beta}_1, \dots, \tilde{\gamma}_p, \tilde{\beta}_p) = (\phi_1 + \frac{\pi}{2}, \phi_2 - \frac{\pi}{2}, \dots, \phi_{2p-1} + \frac{\pi}{2}, \phi_{2p} - \frac{\pi}{2})\,.
    $$
\end{claim}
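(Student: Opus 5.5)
The plan is to show that these angles produce exactly the Laurent pair $(A_p,B_p)$ from \eqref{eq:A-B-sparse-cos} used in the proof of \Cref{claim:qaoa_whole_graph}. With that pair the one-qubit amplitude is $F(z)=\cos(\theta/2)\cos(p\theta)$. When $p=k$ this vanishes at every $\theta_j$, so the expected cut fraction is $1$.

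First, the angle dictionary. Inverting $\phi_{2j-1}=\tilde\gamma_j-\frac{\pi}{2}$ and $\phi_{2j}=\tilde\beta_j+\frac{\pi}{2}$ gives $\tilde\gamma_j=\phi_{2j-1}+\frac{\pi}{2}$ and $\tilde\beta_j=\phi_{2j}-\frac{\pi}{2}$. Combined with $\tilde\gamma_j=-\gamma_j$ and $\tilde\beta_j=2\beta_j$ from \Cref{claim:fermion}, this is exactly the relation written in the statement. So it suffices to work in the $\phi$ variables and evaluate
\[
D\,C(\phi_{2p})D\,C(\phi_{2p-1})D\cdots D\,C(\phi_1)D
\]
in the $Y$-basis.

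Second, the matrix product. Since $C(0)=I$, every layer with $\phi_j=0$ collapses. Reading right to left, the QAOA unitary $C(\phi_{2p})D\cdots C(\phi_1)D$ becomes
\[
D^{p-1}\,C(-\tfrac{\pi}{4})\,D\,C(\tfrac{\pi}{4})\,D^{p}.
\]
Here $D^p$ comes from the $D$'s attached to $\phi_1,\dots,\phi_p$, and $D^{p-1}$ from the trailing identities $\phi_{p+2},\dots,\phi_{2p}$. A direct $2\times2$ computation, using $C(\pm\pi/4)=\frac{1}{\sqrt2}\bigl(\begin{smallmatrix}1&\mp1\\ \pm1&1\end{smallmatrix}\bigr)$, gives
\[
C(-\tfrac{\pi}{4})\,D\,C(\tfrac{\pi}{4})=\frac12\begin{pmatrix} z+z^{-1} & z-z^{-1}\\ z-z^{-1} & z+z^{-1}\end{pmatrix}.
\]
Conjugating by the diagonal powers of $D$, the first column becomes $z^{-1}\cdot\frac12(z^{3-2p}+z^{1-2p})=z^{-1}A_p(z)$ and $z^{-1}\cdot\frac12(z-z^{-1})=z^{-1}B_p(z)$. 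This is precisely the form in \Cref{cor:state-theorem} with $(A,B)=(A_p,B_p)$, and the second column matches the required $-zB_p^\#,\ zA_p^\#$.

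Third, conclude. By the proof of \Cref{thm:express}, the amplitude is $F(z)=\frac12(z^{-2}A_p+z^2A_p^\#-iB_p-iB_p^\#)$. The proof of \Cref{claim:qaoa_whole_graph} already computed this as $\cos(\theta/2)\cos(p\theta)$. At $p=k$ we have $\cos(p\theta_j)=\cos(j\pi-\frac{\pi}{2})=0$ for all $j$. Hence, by \Cref{cor:qaoa_ring_equals_expression}, $\frac{1}{2k}\bra{\vec\gamma,\vec\beta}H_C\ket{\vec\gamma,\vec\beta}=1$. Since $H_C\le 2k$ as an operator, zero shortfall in expectation forces the measured state to be supported entirely on maximum cuts, so every outcome is a perfect cut.

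The main obstacle is bookkeeping rather than mathematics. One must count exactly how many $D$ factors sit on each side of the nontrivial block ($p$ on the right, $p-1$ on the left), and keep the sign conventions in $\phi\leftrightarrow(\tilde\gamma,\tilde\beta)\leftrightarrow(\gamma,\beta)$ consistent. I would double-check the product by verifying the $p=2$ case explicitly.
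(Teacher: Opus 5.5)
Your proposal is correct and follows essentially the paper's argument: collapse the $\phi_j=0$ layers so the product reduces to $D^{p-1}C(-\pi/4)\,D\,C(\pi/4)\,D^{p-1}$, identify the resulting Laurent pair as $(A_p,B_p)$, and then reuse the computation $F=\cos(\theta/2)\cos(p\theta)$ from \Cref{claim:qaoa_whole_graph}. The differences are cosmetic. The paper factors the known pair into this product, whereas you multiply the given angles forward. You also keep the trailing $D$ (giving $D^{p}$), check the second column against \Cref{cor:state-theorem}, and spell out why zero expected shortfall forces every measurement outcome to be a perfect cut.
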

\begin{proof}
Using \Cref{thm:state-theorem}, we decompose the Laurent pair chosen in \Cref{claim:qaoa_whole_graph}. Note that 
$$
\begin{pmatrix}
    z^{1-p} & 0 \\
    0 & z^{p-1}
\end{pmatrix}
\begin{pmatrix}
    (z + z^{-1})/2 & (z-z^{-1})/2 \\
    (z-z^{-1})/2 & (z + z^{-1})/2 \\
\end{pmatrix}
\begin{pmatrix}
    z^{1-p} & 0 \\
    0 & z^{p-1}
\end{pmatrix}
\begin{pmatrix}
1 \\ 
0
\end{pmatrix}
=
\frac{1}{2}
\begin{pmatrix}
    z^{3-2p} + z^{1-2p}\\
    z - z^{-1}
\end{pmatrix}
=
\begin{pmatrix}
    A_p(z)\\
   B_p(z)
\end{pmatrix}\,,
$$
Moreover,
$$
C(-\pi/4) D C(\pi/4) = 
\begin{pmatrix}
    1/\sqrt{2} & 1/\sqrt{2} \\
    -1/\sqrt{2} & 1/\sqrt{2}
\end{pmatrix}
\begin{pmatrix}
    z^{-1} & 0 \\
    0 & z
\end{pmatrix}
\begin{pmatrix}
    1/\sqrt{2} & -1/\sqrt{2} \\
    1/\sqrt{2} & 1/\sqrt{2}
\end{pmatrix}
=
\frac{1}{2}
\begin{pmatrix}
   z + z^{-1} & z - z^{-1} \\
   z - z^{-1} & z + z^{-1}
\end{pmatrix}\,.
$$
This is generated by $\phi_j = 0$ for $1 \le j \le p-1$, $\phi_p = \pi/4$, $\phi_{p+1} = -\pi/4$, and $\phi_j = 0$ for $p+2 \le j \le 2p$. 
So, the QAOA with these angles  (up to a factor of $2$ in $\beta$ and sign in $\gamma$) optimizes $C_{2k}$ when $p = k$.
\end{proof}
When $p > k$, the QAOA can optimize $C_{2k}$ starting with the angles in \Cref{claim:wholegraph_chosenangles}, and then setting $\gamma_j = \beta_j = 0$ for any $j > k$ (i.e. beyond depth $k$).

\subsection{When $n$ is odd}
When $n$ is odd the maximum cut size is $n-1$. It remains to show that the QAOA always finds such a cut when $n < 2p + 2$.

There is again a reduction to one-qubit systems when $n$ is odd:
\begin{claim}[\cite{Wang2018Fermionic}]
\label{claim:fermion_odd}
Consider the one-qubit Hamiltonian
$$
h_c(\theta) = X \cos\theta + Z \sin\theta
$$
and one-qubit QAOA state
$$
\ket{\chi(\theta)} := e^{-i \tilde{\beta}_p X}e^{-i \tilde{\gamma_p} h_c(\theta)} \dots e^{-i \tilde{\beta}_1 X}e^{-i \tilde{\gamma_1} h_c(\theta)} \ket{+}\,,
$$
where
$\vec{\tilde{\gamma}} := (\tilde{\gamma}_1, \dots, \tilde{\gamma}_p) = (-\gamma_1, \dots, -\gamma_p)$ and $\vec{\tilde{\beta}} := (\tilde{\beta}_1, \dots, \tilde{\beta}_p) = (2\beta_1, \dots, 2\beta_p)$.
For $n = 2k + 1$,
\begin{align*}
&\bra{\vec{\gamma}, \vec{\beta}} H_C \ket{\vec{\gamma}, \vec{\beta}} = k - \sum_{j=1}^k \bra{\chi(\theta_j)} h_c(\theta_j) \ket{\chi(\theta_j)} \,,\qquad \theta_j = \pi \cdot \frac{2j}{2k+1}\,.
\end{align*}
\end{claim}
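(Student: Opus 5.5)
The plan is to mirror the self-contained proof of \Cref{claim:fermion} from \Cref{apx:fermion}: a Jordan--Wigner transformation followed by a Fourier transform. For odd $n$ the set of momenta changes, and one unpaired mode appears. First, rotate to the $X$-basis (Hadamard on every qubit), so that the mixer becomes $\sum_i Z_i$, the cost becomes $\sum_i \frac12(I - X_iX_{i+1})$, and $\ket{+}^{\otimes n}$ becomes $\ket{0}^{\otimes n}$. Both Hamiltonians commute with the parity $P=\prod_i Z_i$ (the original $\prod_i X_i$). The initial state lies in the $P=+1$ sector, so the whole QAOA evolution stays there.

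Next, apply Jordan--Wigner with $c_i$ built from $Z$-strings. The bulk terms $X_iX_{i+1}$ for $i<n$ become $(c_i^\dagger - c_i)(c_{i+1}^\dagger + c_{i+1})$. The wrap-around term $X_nX_1$ picks up a factor $-P$. In the $P=+1$ (even fermion number) sector this gives antiperiodic boundary conditions $c_{n+1}=-c_1$, exactly as in the even case. Fourier transforming with momenta $q\in\{\pi(2j-1)/n : j=1,\dots,n\}$ makes the cost quadratic and block diagonal in pairs $(q,-q)$. The mixer becomes $\sum_q(2c_q^\dagger c_q-1)$.

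The new feature for $n=2k+1$ is that the momenta consist of $k$ genuine pairs $\{q,-q\}$ with $q=\pi(2j-1)/n$, $j=1,\dots,k$, together with one self-paired mode $q=\pi$. I will treat the two cases separately.

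\emph{The unpaired mode $q=\pi$.} Here the pairing term $c_q c_{-q}$ vanishes by antisymmetry, so both the cost and the mixer act on this mode only through $c_\pi^\dagger c_\pi$. The mode starts empty and stays empty. It therefore contributes a state-independent constant to $\bra{\vec{\gamma},\vec{\beta}}H_C\ket{\vec{\gamma},\vec{\beta}}$, which I will evaluate directly.

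\emph{The paired modes.} Each pair $(q,-q)$ acts on the two-dimensional space $\mathrm{span}\{\ket{0}, c_q^\dagger c_{-q}^\dagger\ket{0}\}$. As in the even case, the mixer restricted to this space is $X$ (after relabeling axes), and the cost is $X\cos\theta + Z\sin\theta$ up to an additive constant. The rescalings $\tilde\beta = 2\beta$ and $\tilde\gamma=-\gamma$ come from the factor of $2$ in the number operator and the sign in $\frac12(I-XX)$. The angle $\theta$ is an affine function of $q$. To match the stated $\theta_j = 2\pi j/(2k+1)$, I expect $\theta = \pi - q$, which sends the set $\{\pi(2j-1)/n\}_{j\le k}$ to $\{2\pi j'/n\}_{j'\le k}$ via $j' = k+1-j$. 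This relabeling is harmless because the formula sums over all $j$.

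The final step is bookkeeping of constants. The identity part of $H_C$ gives $n/2 = k+\frac12$. The $k$ pairs give $-\sum_j \bra{\chi(\theta_j)}h_c(\theta_j)\ket{\chi(\theta_j)}$ plus possibly constants. The unpaired empty mode must supply the remaining $-\frac12$, so that the total is $k - \sum_j(\cdot)$. I expect this constant and sign check to be the main obstacle. It requires pinning down the exact form of $\sum_i X_iX_{i+1}$ in momentum space, of the form $\sum_q[\cos q\,(c_q^\dagger c_q - c_{-q}c_{-q}^\dagger) + \sin q\,(\cdots)]$, and confirming that the $q=\pi$ term evaluates to $\cos\pi\cdot(-1)=+1$ in the vacuum. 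I would first sanity-check the result at $p=0$, where the total must equal $n/2$ since every single-qubit value is $\bra{+}h_c\ket{+}=\cos\theta_j$ and $\sum_{j=1}^k\cos(2\pi j/(2k+1))=-\frac12$. This gives $k+\frac12$, consistent with the claimed formula and confirming the constant.
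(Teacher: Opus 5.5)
Your plan is correct and is essentially the paper's argument (Jordan--Wigner, Fourier transform, pairing of modes $(q,-q)$, and one unpaired mode whose constant contribution turns $\tfrac n2$ into $k$); the only real difference is your Hadamard-first convention, which makes $\ket{+}^{\otimes n}$ the fermionic vacuum and so puts you in the antiperiodic sector with the unpaired mode at $q=\pi$ and the relabeling $\theta=\pi-q$, whereas the paper's convention makes $\ket{+}^{\otimes n}$ fully occupied, which for odd $n$ gives periodic momenta $\theta_m=2\pi m/n$ directly and the unpaired mode at $\theta_0=0$. One slip to fix when you carry it out: with the vacuum as initial state the mixer is $\sum_q(1-2c_q^\dagger c_q)$, not $\sum_q(2c_q^\dagger c_q-1)$, and since flipping the sign of $\beta$ alone changes the energy, this sign is what gives $\tilde\beta=+2\beta$ rather than $-2\beta$.
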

Note that the angles $\theta_j$ are slightly different.  \Cref{claim:fermion_odd} allows us to use \Cref{thm:express}, which gives the following consequence:
\begin{corollary}
\label{cor:best_value_small_odd_cycle}
Suppose $n = 2k + 1$. For $p \ge 1$,
\begin{align*}
 \frac{1}{n} \cdot \max_{\vec{\gamma}, \vec{\beta}} \bra{\vec{\gamma}, \vec{\beta}} H_C \ket{\vec{\gamma}, \vec{\beta}} 
 = \max_{(A,B) \in \mathcal{S}_{2p-1}} \left[ \frac{n-1}{n} -  \frac{1}{2n} \sum_{j=1}^k 
 \left| z_j^{-2} A(z_j)  + z_j^2 A^\#(z_j) - iB(z_j)  - i B^\#(z_j) \right|^2
 \right] \,,
 \end{align*}
 where
 $$
 z_j =e^{i \theta_j / 2} = \exp[ i \frac{\pi}{2} \cdot \frac{2j}{2k+1}]\,.
 $$
\end{corollary}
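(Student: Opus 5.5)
The corollary follows the same route as \Cref{cor:qaoa_ring_equals_expression}: combine \Cref{claim:fermion_odd} with \Cref{thm:express}, then normalize by $n = 2k+1$ instead of $2k$.

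First, fix arbitrary angles $\vec{\gamma},\vec{\beta}$ and pass to $\vec{\tilde\gamma} = -\vec\gamma$, $\vec{\tilde\beta} = 2\vec\beta$. \Cref{claim:fermion_odd} gives
\[
\bra{\vec{\gamma},\vec{\beta}}H_C\ket{\vec{\gamma},\vec{\beta}} = k - \sum_{j=1}^k \bra{\chi(\theta_j)}h_c(\theta_j)\ket{\chi(\theta_j)}, \qquad \theta_j = \pi\cdot\tfrac{2j}{2k+1}.
\]

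Next, apply \Cref{thm:express} to each one-qubit term. Each term equals $-1 + \tfrac12\left|z_j^{-2}A(z_j)+z_j^2A^\#(z_j)-iB(z_j)-iB^\#(z_j)\right|^2$ with $z_j = e^{i\theta_j/2}$. The key point is that the pair $(A,B)\in\mathcal{S}_{2p-1}$ is the same for every $j$. This holds because in \Cref{cor:state-theorem} the pair depends only on the angles $\phi_1,\dots,\phi_{2p}$, while $\theta$ enters only through the substitution $z=e^{i\theta/2}$. The $k$ systems share the angles and differ only in $\theta_j$, so one Laurent pair, evaluated at different points, describes all of them.

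Summing over $j$ then gives
\[
\bra{\vec{\gamma},\vec{\beta}}H_C\ket{\vec{\gamma},\vec{\beta}} = 2k - \tfrac12\sum_{j=1}^k\left|\cdots\right|^2 .
\]
Dividing by $n$ turns $2k/n$ into $(n-1)/n$ and the sum's prefactor into $1/(2n)$, which is exactly the expression inside the maximum.

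Finally, the two maxima coincide because the correspondence goes both ways. The forward direction of \Cref{thm:express} shows that the set of values achieved by angles is contained in the set achieved by pairs. The converse direction (every $(A,B)\in\mathcal{S}_{2p-1}$ arises from some angles, again independently of $\theta$) gives the reverse containment. The maps $\tilde\gamma=-\gamma$ and $\tilde\beta=2\beta$ are bijections of $\R^p$, so ranging over all $\vec\gamma,\vec\beta$ is the same as ranging over all $\vec{\tilde\gamma},\vec{\tilde\beta}$.

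The only point needing care is this $\theta$-uniformity of the correspondence in \Cref{thm:express}. It holds because the correspondence is an identity of Laurent polynomials in $z$ (\Cref{cor:state-theorem}), not a pointwise statement at a single $\theta$.
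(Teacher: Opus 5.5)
Your proposal is correct and follows essentially the same route as the paper. The paper gives no separate proof here; it states the corollary as a direct consequence of substituting \Cref{thm:express} into \Cref{claim:fermion_odd}, exactly as in the proof of \Cref{cor:qaoa_ring_equals_expression}, with the normalization by $n=2k+1$ turning $2k$ into $n-1$. Your remark that a single pair $(A,B)$, determined by the angles alone and not by $\theta$, serves all $k$ one-qubit systems in both directions of the correspondence is a point the paper leaves implicit, and it is worth keeping.
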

We now prove that the QAOA always finds a cut of maximum size on the odd cycle.
\begin{claim}
\label{claim:qaoa_odd_whole_graph}
    Suppose $n = 2k+1$. For any $p \ge k \ge 1$,
   \begin{align*}
    \frac{1}{2k+1} \cdot \max_{\vec{\gamma}, \vec{\beta}} \bra{\vec{\gamma}, \vec{\beta}} H_C \ket{\vec{\gamma}, \vec{\beta}} = \frac{2k}{2k+1} = \frac{n-1}{n}\,.
    \end{align*}
\end{claim}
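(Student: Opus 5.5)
The upper bound is immediate. $H_C$ counts cut edges, and the maximum cut of $C_{2k+1}$ has $2k$ edges, so $\bra{\vec{\gamma},\vec{\beta}} H_C\ket{\vec{\gamma},\vec{\beta}} \le 2k$ for every choice of angles. For the lower bound I would use \Cref{cor:best_value_small_odd_cycle}. It suffices to find a single pair $(A,B)\in\mathcal{S}_{2p-1}$ whose amplitude polynomial $F(z)=\frac12\left(z^{-2}A(z)+z^2A^\#(z)-iB(z)-iB^\#(z)\right)$ vanishes at every $z_j=e^{i\theta_j/2}$, where $\theta_j=\frac{2\pi j}{2k+1}$ and $1\le j\le k$. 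As in \Cref{claim:qaoa_whole_graph}, it is enough to treat $p=k$: for $p>k$ we reuse the depth-$k$ angles and set the remaining angles to $0$.

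The natural target is modeled on the even case, where both \Cref{claim:qaoa_whole_graph} and \Cref{claim:qaoa_not_whole_graph_lower} used $F$ of the form $\cos(\theta/2)\,P(\cos\theta)$. I would take the normalized Dirichlet kernel times $\cos(\theta/2)$:
$$
F_k^{\mathrm{odd}}(z)=\frac{z+z^{-1}}{2}\cdot\frac{1}{2k+1}\sum_{m=-k}^{k} z^{2m}
=\cos(\theta/2)\,\frac{\sin\!\big((2k+1)\theta/2\big)}{(2k+1)\sin(\theta/2)}\,.
$$
This is a Laurent polynomial with only odd powers of $z$, and it satisfies $F^\#=F$. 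Its degree is $2k+1=2p+1$, so it has the correct shape for a depth-$p$ amplitude. It also satisfies $F(1)=1$, matching the fact (used in \Cref{claim:qaoa_not_wholegraph_upper}) that the state is $\ket{+}$ at $\theta=0$. On the unit circle $|F|\le 1$, because both factors are bounded by $1$ in absolute value. Finally, $\sin((2k+1)\theta_j/2)=\sin(\pi j)=0$ while $\sin(\theta_j/2)\neq0$ for $1\le j\le k$, so $F(z_j)=0$ for every $j$. Substituting into \Cref{cor:best_value_small_odd_cycle} then gives exactly $\frac{n-1}{n}$.

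The main obstacle is realizability: we must show that there exists $(A,B)\in\mathcal{S}_{2k-1}$ producing exactly this $F$. My plan is to rewrite the target as $\cos(\theta/2)\,Q(\cos\theta)$, where $Q(y)=\frac{1}{2k+1}\big(1+2\sum_{m=1}^k T_m(y)\big)$ is a real polynomial of degree $k$. I would then check that $Q$ satisfies the same hypotheses that \Cref{claim:qaoa_not_whole_graph_lower} listed for $U_p/(p+1)$: $Q(1)=1$, $|Q(y)|\le 1$ on $[-1,1]$, and $Q(y)\ge 1$ for $y>1$ (the last holds since each $T_m(y)\ge1$ there). With these checks done, I would invoke the construction of \Cref{claim:find_poly_pair_from_cheby} in \Cref{apx:optimal_poly}, ideally as stated for general polynomials with these properties, to produce $(A,B)$.

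If that appendix argument turns out to be specific to $U_p$, I would redo it directly. The condition $|F|\le1$ on the circle makes $1-|F|^2$ a nonnegative trigonometric polynomial. A Fejér--Riesz factorization would then supply the complementary amplitude $\braket{\theta_\perp}{\chi(\theta)}$ with real coefficients and the correct parity. From $F$ and this complement I would solve the linear relations for $z^{-2}A+z^2A^\#$ and $B+B^\#$. The delicate step is confirming that the resulting $A$ and $B$ have degree at most $2k-1$ and satisfy $AA^\#+BB^\#=1$, which follows from unitarity of the matrix in \Cref{cor:state-theorem}.
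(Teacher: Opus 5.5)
Your proposal is correct and matches the paper's proof. The paper uses the same target $F$ and the same reduction to $p=k$. It writes the polynomial as $V=U_k+U_{k-1}$, which equals your $(2k+1)\,Q=1+2\sum_{m=1}^k T_m$. It then invokes \Cref{claim:find_poly_pair_from_cheby}, which is stated for any real degree-$p$ polynomial $V$ with the three listed properties, so your Fejér--Riesz fallback is not needed.
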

\begin{proof}
Since the maximum cut size is $n-1$, we just need to prove a lower bound.

Suppose the claim is true when $p = k$. Then when $p \ge k$, we can use the associated angles for the first $k$ layers of the QAOA, and set the remaining angles to $0$. So we focus on the case $p = k$. 

Invoking \Cref{cor:best_value_small_odd_cycle}, the claim is true if 
$$
F(z) = \frac{1}{2}  \left(  z^{-2} A(z)  + z^2 A^\#(z) - iB(z)  - i B^\#(z) \right) = 0
$$
for every $z_j = e^{i \theta_j / 2} = \exp[ i \frac{\pi}{2} \cdot \frac{2j}{2p+1}]$.

Consider the function
$$
F_p(z) = \frac{1}{4p+2}(z + z^{-1})  \sum_{m=-p}^p z^{2m}\,.
$$
For each $z_j$, the sum can be rewritten as 
$$
\sum_{m=-p}^p z_j^{2m} = \frac{z_j^{2p+1} - z_j^{-(2p+1)}}{z_j - z_j^{-1}} = \frac{\sin((2p+1)\theta_j/2)}{\sin(\theta_j/2)} = \frac{\sin(j\pi)}{\sin(j\pi/(2p+1))} = 0\,.
$$
So $F_p(z_j) = 0$ for each $z_j$. The claim is proven if there exists a Laurent polynomial pair $(A, B) \in \mathcal{S}_{2p-1}$ that sets $F = F_p$.

In our convention $z = e^{i \theta/2}$, $F_p$ has a simpler form:
\begin{align*}
F_p(z) &= \frac{1}{4p+2}(z + z^{-1}) \cdot \frac{z^{2p+2} + z^{2p} - z^{-2p} - z^{-(2p+2)}}{z^2 - z^{-2}}  \\
&= \frac{\cos(\theta/2)}{2p+1}  \cdot \frac{\sin((p+1)\theta) + \sin(p \theta)}{\sin(\theta)} \\
&= \frac{\cos(\theta/2)}{2p+1}
\left( U_{p}(\cos \theta) + U_{p-1}(\cos \theta)
\right)\,,
\end{align*}
where $U_{p}$ is the ${p}^\text{th}$ Chebyshev polynomial of the second kind. 

Note that the polynomial $V(y) = U_p(y) + U_{p-1}(y)$ has real coefficients, and satisfies
\begin{align*}
        &V(1) = 2p + 1\,, &  &V(1) > 0\,, & V(y) \ge V(1) \text{ if } y > 1\,, &  &|V(y)| \le V(1) \text{ if } {-1} \le y \le 1\,.
\end{align*}
In fact the last three conditions are preserved under positive linear combinations, so they are satisfied by positive sums of Chebyshev polynomials. By \Cref{claim:find_poly_pair_from_cheby}, there exists a polynomial pair $(A,B) \in \mathcal{S}_{2p-1}$ that sets $F$ equal to $F_p$. 
\end{proof}

\section{Discussion}
We have shown that the QAOA finds cuts with largest possible expected size among symmetric local algorithms on the ring of disagrees. 
It is surprising that we do not know any \textit{classical} algorithm in this family matching the performance of the QAOA.
Our results also apply when the edges of the ring are an assortment of agree and disagree, since the performance of the QAOA does not depend on the sign of any edge (e.g.~\cite[Appendix A]{mbeng2019quantum}).

Although we proved the existence of optimal angles, we do not know a closed form for these angles. They must be related to the roots of Chebyshev polynomials.

We are not aware of prior work studying the QAOA in the language of quantum signal processing. 
Another quantum algorithm which is optimized with a good choice of Laurent polynomials is~\cite{Farhi1999Insertion}.

During the preparation of this document we became aware of another group (Uri Kol, Maor Ben Shahar, Kfir Sulimany, Dirk Englund) independently proving \Cref{thm:ring} using the Lean proof assistant. Unlike this work, such a proof is written in computer code, and can be formally verified, once the definitions and theorem statements are properly checked.

This work was completed with the help of contemporary artificial intelligence systems. It is exciting to see what new research can be done by humans equipped with this emerging technology.

\section*{Acknowledgements}
Thanks to Edward Farhi and Sam Gutmann for introducing the problem to me.
Thanks to Richard Allen for teaching me about QSP, Sabee Grewal for teaching me about fermions, and Antonio Anna Mele for suggesting~\cite{mbeng2019quantum}. Thanks to Richard Allen, Ryan Babbush, Edward Farhi, Sam Gutmann, Zhang Jiang, and Benjamin Villalonga for comments on a draft of this document.

\section*{AI Disclosure}
The author extensively used ChatGPT 5.5 Pro and to some extent Claude Opus 4.8. I had several interactive sessions over the period of one week in early June 2026, often for many hours at a time.
I used both models to explain concepts and numerically investigate optimal angles on the ring.
I used ChatGPT 5.5 Pro to discover the connection to QSP, identify optimal Laurent polynomials, and collaboratively discover all proofs. ChatGPT 5.5 Pro wrote a proof of \Cref{thm:state-theorem}, which I read and polished. It generated many other ``proofs'' which were directionally correct; I used the output to prove these claims myself. Nearly every line in this document was written by a human.

\clearpage
\newpage
\printbibliography
\appendix
\clearpage
\newpage
\section{Proof of reduction to one-qubit systems (\texorpdfstring{\Cref{claim:fermion}}{Claim \ref{claim:fermion}})}
\label{apx:fermion}

For convenience, we rederive the result in \cite{Wang2018Fermionic} in detail. We focus on $n = 2k$. At the end we comment on how odd $n$ changes the picture.

 First, let $\tilde{H}_C = \sum_{i=1}^{2k} Z_i Z_{i+1}$. 
 All indices are modulo $2k$.
 Then 
$$
\bra{\vec{\gamma}, \vec{\beta}}H_C\ket{\vec{\gamma}, \vec{\beta}} = k - \frac{1}{2}\bra{\vec{\gamma}, \vec{\beta}}\tilde{H}_C\ket{\vec{\gamma}, \vec{\beta}} \,.
$$
We wish to minimize $\bra{\vec{\gamma}, \vec{\beta}}\tilde{H}_C\ket{\vec{\gamma}, \vec{\beta}}$. We then apply the Jordan-Wigner transformation. Let
\begin{align*}
    S_j^{\pm} &=\frac{1}{2} \left( Y_j \pm i Z_j \right) \\
    a_j &= S_j^{-} \prod_{j'=1}^{j-1} (-X_{j'})  \\
        a_j^\dagger &= S_j^{+} \prod_{j'=1}^{j-1} (-X_{j'})
\end{align*}
We verify the canonical anticommutation relations. For $v > u$, we have
\begin{align*}
a_v a_{u} &=  S_v^- \prod_{j=1}^{v-1} (-X_j) \cdot S_u^- \prod_{j=1}^{u-1} (-X_j) = - S_v^{-} S_u^{-} \prod_{j=u}^{v-1} (-X_j)\,, \\
a_u a_{v} &=  S_u^- \prod_{j=1}^{u-1} (-X_j) \cdot S_v^- \prod_{j=1}^{v-1} (-X_j) = S_u^{-} S_v^{-} \prod_{j=u}^{v-1} (-X_j)  = -a_v a_u\,,
\end{align*}
so $\{a_u, a_v\} = 0$. Similarly, $\{a_u^\dagger, a_v^\dagger\} = 0$ and $\{a_u, a_v^\dagger\} = \{a_v^\dagger, a_u\} = 0$.

When $u=v$, we have
\begin{align*}
a_u^2 &= (S_u^{-})^2 = \frac{1}{4} (Y_u-iZ_u)(Y_u-iZ_u) = \frac{1}{4} \left( Y_u^2 -iZ_uY_u -iY_uZ_u - Z_u^2 \right) = 0\,, \\
a_u a_u^\dagger &= S_u^{-} S_u^{+} = \frac{1}{4} (Y_u-iZ_u)(Y_u+iZ_u) =  \frac{1}{4} \left( Y_u^2 -iZ_uY_u + iY_uZ_u + Z_u^2 \right) = \frac{1}{2}\left( I-X_u\right)\,.
\end{align*}
so $\{a_u, a_u\} = 0$. Similarly, $\{a_u^\dagger, a_u^\dagger\} = 0$. However, $\{a_u, a_u^\dagger\} = \frac{1}{2}\left(I - X_u + I + X_u\right) = I$. 

In this picture, 
$$
H_M = \sum_{i=1}^{2k} (2 a_i^\dagger a_i - 1)
$$
For $u < 2k$, we have
$$
(a_u + a_{u}^\dagger)(a_{u+1} - a_{u+1}^\dagger) = Y_u (-iZ_{u+1}) (-X_u) = Z_u Z_{u+1}\,. 
$$
For $u = 2k$, we have
$$
(a_{2k} + a_{2k}^\dagger)(a_1 - a_1^\dagger) = - (-iZ_1)Y_{2k} \prod_{i=1}^{2k-1} (-X_i) = - Z_1 Z_{2k} \prod_{i=1}^{2k} (-X_i)\,.
$$
Altogether, this means
$$
\tilde{H}_C = - (a_{2k} + a_{2k}^\dagger)(a_1 - a_1^\dagger) P_X + \sum_{u=1}^{2k-1} (a_u + a_{u}^\dagger)(a_{u+1} - a_{u+1}^\dagger)
$$
where $P_X = \prod_{i=1}^{2k} X_i$.

We then add a phase factor to unify the expression. Let $b_j := a_j e^{-i \pi  \cdot j/(2k)}$. Then 
$$
(e^{i \pi \cdot j/(2k)} b_j + e^{-i \pi \cdot j/(2k)} b_j^\dagger)(e^{i \pi \cdot (j+1)/(2k)} b_{j+1} - e^{-i \pi \cdot (j+1)/(2k)} b_{j+1}^\dagger) 
= e^{-i \pi/2k} \left( e^{i \pi (j+1)/k} b_j b_{j+1} - b_j b^\dagger_{j+1} \right) + \text{h.c.} 
$$
Similarly,
$$
- (a_{2k} + a_{2k}^\dagger)(a_1 - a_1^\dagger) = (b_{2k} + b_{2k}^\dagger)(e^{i \pi / 2k} b_1 - e^{-i\pi/2k} b_1^\dagger)  
= e^{-i \pi/2k} \left( e^{i \pi/k} b_{2k} b_1 - b_{2k} b_1^\dagger \right) + \text{h.c.}
$$
Moreover, the QAOA stays within the $P_X = 1$ subspace, as we start in $\ket{+}^{\otimes 2k}$, and $P_X$ commutes with $H_M$ and $H_C$. So within this subspace,
\begin{align*}
    H_M &= \sum_{j=1}^{2k} (2 b_j^\dagger b_j - 1)\,, \\
    \tilde{H}_C &= e^{-i\pi/2k} \sum_{j=1}^{2k} 
    \left( e^{i \pi (j+1)/k} b_j b_{j+1} - b_j b^\dagger_{j+1} \right) 
    +
    e^{i \pi/2k} \sum_{j=1}^{2k}
    \left(  b_j^\dagger b_{j+1} - e^{-i \pi (j+1)/k} b_j^\dagger b^\dagger_{j+1} \right) \,.
\end{align*}
We now apply a Fourier transform. Let $c_m = \frac{1}{\sqrt{2k}} \sum_{j=1}^{2k}e ^{i\omega \cdot jm} b_j$, where $\omega = 2 \pi / (2k)$. Then
\begin{align*}
\sum_{m=1}^{2k} e^{-i\omega m} c_m^\dagger c_m &= \frac{1}{2k} \sum_{m=1}^{2k} \sum_{j,j'=1}^{2k} e^{i\omega (-1-j+j')m} b_j^\dagger b_{j'} 
= \frac{1}{2k} \sum_{j,j'=1}^{2k} 2k \cdot \delta_{j'=1+j} b_j^\dagger b_{j'} 
= \sum_{j=1}^{2k} b_j^\dagger b_{j+1}\,, \\
    \sum_{m=1}^{2k} e^{i\omega m} c_{m} c_{1-m} 
    &= \frac{1}{2k} \sum_{m=1}^{2k} \sum_{j,j'=1}^{2k} e^{i\omega (1+j-j')m} e^{i\omega j'} b_j b_{j'}
    = \frac{1}{2k} \sum_{j,j'=1}^{2k} 2k \cdot \delta_{j'=1+j} e^{i\omega j'} b_j b_{j'} 
    = \sum_{j=1}^{2k} b_j b_{j+1} e^{i\omega (j+1)}\,.
\end{align*}
So the system becomes organized in pairs of two fermions:
\begin{align*}
        H_M &= \sum_{j=1}^{2k} (2 c_j^\dagger c_j - 1)\,, \\
        \tilde{H}_C &= e^{-i\omega/2} \sum_{m=1}^{2k} e^{i\omega m} c_m (c_{1-m} - c_m^\dagger) + e^{i\omega/2} \sum_{m=1}^{2k} e^{-i\omega m} c_m^\dagger (c_m - c_{1-m}^\dagger).
\end{align*}
We group the cost Hamiltonian. Let $\theta_m = (m-1/2) \omega$. Then $\theta_{1-m} = - \theta_m$. So
\begin{align*}
\tilde{H}_C &= \sum_{m=1}^{2k} e^{i \theta_m} \left(  c_m c_{1-m} - c_m c_m^\dagger \right) + e^{-i\theta_m}\left( c_m^\dagger c_m - c_m^\dagger c_{1-m}^\dagger \right)
\end{align*}
We simplify this expression.
Since $\{c_m, c_{1-m}\} = 0$, we have $e^{i \theta_m} c_m c_{1-m} + e^{i \theta_{1-m}} c_{1-m} c_{1-(1-m)} = 2 i \sin(\theta_m) c_m c_{1-m}$. The same holds for $\{c_m^\dagger, c_{1-m}^\dagger\}$. Moreover,
$$
-e^{i \theta_m} c_m c_m^\dagger + e^{-i \theta_m} c_m^\dagger c_m = 
-e^{i \theta_m} (I - c_m^\dagger c_m) + e^{-i \theta_m} c_m^\dagger c_m = -e^{i \theta_m } +  2\cos(\theta_m) c_m^\dagger  c_m\,.
$$
Note that $\cos(\theta_m) = \cos(\theta_{1-m})$, and $e^{i \theta_m} + e^{i \theta_{1-m}} = 2 \cos(\theta_m)$. So 
$$
\tilde{H}_C = \sum_{m=1}^{k} 2i \sin(\theta_m) (c_m c_{1-m} + c_m^\dagger c_{1-m}^\dagger)  + 2 \cos(\theta_m)(c_m^\dagger c_m + c_{1-m}^\dagger c_{1-m} - 1)\,.
$$
We start in the maximum eigenstate of $H_M$, which is the maximum occupancy state for all fermions $c_j$. Both $H_M$ and $\tilde{H}_C$ preserve the parity occupancy of modes $\{m, 1-m\}$; i.e. fermions will occupy both modes or neither mode. So we can treat each pair of fermions as a single ``pseudospin''. For example, $c_m^\dagger c_m + c_{1-m}^\dagger c_{1-m}-1$ checks for occupancy of the fermion pair, i.e. acts as $\ket{11}\bra{11} - \ket{00}\bra{00}$. Meanwhile, $c_m c_{1-m} + c^\dagger_m c^\dagger_{1-m}$ flips the occupancy, i.e. couples $\ket{00}$ and $\ket{11}$. We consider this ``pseudospin'' in the $X$ basis; i.e. $X_m := c_m^\dagger c_m + c_{1-m}^\dagger c_{1-m} - 1$, and $Z_m = i(c_m c_{1-m} + c_m^\dagger c_{1-m}^\dagger)$. Then
\begin{align*}
    H_M &= 2\sum_{m=1}^k X_m \\
    \tilde{H}_C &= 2 \sum_{m=1}^k \cos(\theta_m) X_m + \sin(\theta_m) Z_m \quad &\theta_m = \pi \cdot (2m-1)/(2k)\,.
\end{align*}
In this picture, the initial state is $\ket{+}^{\otimes k}$. Since both $\tilde{H}_C$ and $H_M$ are $1$-local, the final state is a product state. So we can look at each qubit independently. Given
\begin{align*}
h_c(\theta) &:= X \cos \theta + Z \sin \theta\,, \\
\ket{\chi(\theta)} &:= e^{-i 2\beta_p X}e^{i \gamma_p h_c(\theta)} \dots e^{-i 2\beta_1 X}e^{i \gamma_1 h_c(\theta)} \ket{+}\,,
\end{align*}
we can rewrite the performance as
$$
\bra{\vec{\gamma}, \vec{\beta}} H_C \ket{\vec{\gamma}, \vec{\beta}} = k - \frac{1}{2} \bra{\vec{\gamma}, \vec{\beta}} \tilde{H}_C\ket{\vec{\gamma}, \vec{\beta}} = k - \sum_{j=1}^k \bra{\chi(\theta_j)} h_c(\theta_j) \ket{\chi(\theta_j)}\,.
$$
The $\beta$ angles are larger by a factor of two since after we grouped $2k$ sites to $k$ sites the mixer $H_M$ doubles in strength. The $\gamma$ angles change sign, since $H_C = k - \sum_{j=1}^k h_c(\theta_j)$.

\paragraph{Odd $n$}
Suppose $n = 2k + 1$. When $n$ is odd a few things change. We do not need to move from $a_j$ to $b_j$ since the QAOA stays within the $\prod_{i=1}^n(-X_i) = -1$ subspace. Instead of $(c_m, c_{1-m})$, the fermions $(c_m, c_{-m})$ are paired for $1 \le m \le k$. This changes the definition of angle $\theta_m$ to be $\theta_m = m \omega = 2\pi \cdot m/n$. For $c_0$ the fermion is unpaired, but since $\theta_0 = 0$ the operator in $\tilde{H}_C$ matches that in $H_M$. So the one-qubit system corresponding to $c_0$ always gives energy $1$ according to $\tilde{H}_C$. This makes the constant term of $H_C$  equal to $\frac{2k+1}{2} - \frac{1}{2} = k$.
\clearpage
\newpage
\section{Proof of statements from quantum signal processing}
\label{apx:qsp}

\subsection{Proof of \texorpdfstring{\Cref{thm:state-theorem}}{Claim \ref{thm:state-theorem}}}
We start by showing the first set is contained in the second set. The starting vector $(1,0)^T$ corresponds to $(A,B)=(1,0)\in\mathcal{S}_0$. The two elementary actions are
\[
  C(\phi)\begin{pmatrix}A\\B\end{pmatrix}
  =\begin{pmatrix}\cos\phi\,A-\sin\phi\,B\\
  \sin\phi\,A+\cos\phi\,B\end{pmatrix}\,,
\]
and
\[
  D\begin{pmatrix}A\\B\end{pmatrix}
  =\begin{pmatrix}z^{-1}A\\zB\end{pmatrix}\,.
\]
Each application of $D$ changes the parity of the exponents, and increases the Laurent degree by at most $1$. The identity
\[
  AA^\#+BB^\#=1
\]
is preserved on the unit circle $|z| = 1$ because there $C(\phi)$ and $D$ are unitary. Since this is a polynomial identity true on an infinite number of points, it must be true for all $z$.

To show the second set is contained in the first set, we proceed by induction on $r$. The base case $r=0$ is immediate:
If $A,B\in\R$ and $A^2+B^2=1$, then $(A,B)^T=C(\phi_0)(1,0)^T$ for some $\phi_0$.

Now we assume the theorem holds for $r-1$ and take $(A,B)\in\mathcal{S}_r$. Write
\[
  A(z)=\sum_{j=-r}^r a_jz^j,
  \qquad
  B(z)=\sum_{j=-r}^r b_jz^j.
\]
The coefficient of $z^{2r}$ in $AA^\#+BB^\#$ is
\[
  a_ra_{-r}+b_rb_{-r}\,.
\]
Since $AA^\#+BB^\#=1$, this coefficient is zero. We choose $\phi$ so that, after applying $C(-\phi)$,
\[
  \begin{pmatrix}A'\\B'\end{pmatrix}:=C(-\phi)\begin{pmatrix}A\\B\end{pmatrix},
\]
the coefficient of $z^{r}$ in $A'$ is equal to $0$. Assume $(a_r, b_r) \ne (0,0)$. For the first row, choose $(\cos \phi, \sin \phi)$ orthogonal to $(a_{r}, b_{r})$. The second row $(-\sin \phi, \cos \phi)$ will then be proportional to $(a_r, b_r)$, and thus orthogonal to $(a_{-r}, b_{-r})$. Then the coefficient of $z^{-r}$ in $B'$ is equal to $0$. If $(a_r, b_r) = (0,0)$ and $(a_{-r}, b_{-r}) \ne 0$, choose the first row proportional to $(a_{-r}, b_{-r})$. Otherwise use any $\phi$.

Now set
\[
  \begin{pmatrix}\widetilde A\\\widetilde B\end{pmatrix}
  :=D^{-1}\begin{pmatrix}A'\\B'\end{pmatrix}
  =\begin{pmatrix}zA'\\z^{-1}B'\end{pmatrix}.
\]
The terms in $A'$ are $z^{q}$ for $-r \le q \le (r-2)$, because of the parity constraint. So $zA'$ has degree at most $r-1$ and parity $r-1$.
Similarly, the terms in $B'$ are $z^{q}$ for $-r+2 \le q \le r$ because of the parity constraint.
So $z^{-1}B'$ has degree at most $r-1$ and parity $r-1$. Normalization is preserved: 
$$
\widetilde{A} \widetilde{A}^\# + \widetilde{B} \widetilde{B}^\# = A'(A')^\# + B'(B')^\# = A A^\# + B B^\# = 1\,.
$$
Thus $(\widetilde A,\widetilde B)\in\mathcal{S}_{r-1}$.  By the
induction hypothesis
\[
  \begin{pmatrix}\widetilde A\\\widetilde B\end{pmatrix}
  =C(\phi_{r-1})D\cdots D C(\phi_0)\begin{pmatrix}1\\0\end{pmatrix}.
\]
Therefore
\[
  \begin{pmatrix}A\\B\end{pmatrix}
  =C(\phi_r)D
  C(\phi_{r-1})D\cdots D C(\phi_0)\begin{pmatrix}1\\0\end{pmatrix},
\]
which completes the induction.

\subsection{Proof of \texorpdfstring{\Cref{cor:state-theorem}}{Corollary \ref{cor:state-theorem}}}
We start with the following lemma:
\begin{lemma}
If
\[
  V_r(z)=C(\phi_r)D C(\phi_{r-1})D\cdots D C(\phi_0)
\]
and
\[
  V_r(z)\begin{pmatrix}1\\0\end{pmatrix}
  =\begin{pmatrix}A(z)\\B(z)\end{pmatrix},
\]
then
\begin{equation}
  V_r(z)=
  \begin{pmatrix}
    A(z)&-B^\#(z)\\
    B(z)&A^\#(z)
  \end{pmatrix}.
  \label{eq:matrix-form-V}
\end{equation}
\end{lemma}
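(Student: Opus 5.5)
Induction on $r$ is the natural route. The claim is that every product $V_r$ lies in the set of matrices of the form $\begin{pmatrix} P & -Q^\# \\ Q & P^\# \end{pmatrix}$ with $P,Q\in\R[z,z^{-1}]$. Once this structural statement is established, the lemma follows immediately. The first column of $V_r$ is by definition $V_r(1,0)^T=(A,B)^T$, so $P=A$ and $Q=B$, and the second column is then forced to be $(-B^\#,A^\#)^T$.

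For the base case $r=0$, $V_0=C(\phi_0)=\begin{pmatrix}\cos\phi_0 & -\sin\phi_0\\ \sin\phi_0 & \cos\phi_0\end{pmatrix}$. Constants satisfy $c^\#=c$, so this has the required form with $P=\cos\phi_0$ and $Q=\sin\phi_0$.

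For the inductive step, I check that the form is preserved under left multiplication by $D$ and by $C(\phi)$.

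\begin{itemize}
\item \textbf{Multiplying by $D$.} Left multiplication by $D=\mathrm{diag}(z^{-1},z)$ gives
\[
\begin{pmatrix} z^{-1}P & -z^{-1}Q^\# \\ zQ & zP^\#\end{pmatrix}.
\]
Set $P'=z^{-1}P$ and $Q'=zQ$. Then $(Q')^\#=z^{-1}Q^\#$ and $(P')^\#=zP^\#$, which matches all four entries.
\item \textbf{Multiplying by $C(\phi)$.} With $c=\cos\phi$ and $s=\sin\phi$, the new first column is $(cP-sQ,\ sP+cQ)^T$. The new second column is $(-cQ^\#-sP^\#,\ -sQ^\#+cP^\#)^T$. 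Since $c,s$ are real, $\#$ is linear over $\R$ and fixes $c$ and $s$. So with $P'=cP-sQ$ and $Q'=sP+cQ$ we get $-(Q')^\#=-sP^\#-cQ^\#$ and $(P')^\#=cP^\#-sQ^\#$, which again matches.
\end{itemize}

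Since $V_r=C(\phi_r)\,D\,V_{r-1}$, the induction closes. The lemma then follows from the first-column identification above.

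I expect no real obstacle. The only point needing care is that $\#$ commutes with multiplication by real scalars and that $(z^{\pm1}f)^\#=z^{\mp1}f^\#$. Both are immediate from $f^\#(z)=f(z^{-1})$; in particular no unit-circle argument is needed.

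As a sanity check, the resulting matrix has determinant $AA^\#+BB^\#=1$, consistent with $\det C=\det D=1$.

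Corollary~\ref{cor:state-theorem} then follows by right-multiplying by $D$ with $r=2p-1$, and by using Claim~\ref{thm:state-theorem} for the equivalence of the sets.
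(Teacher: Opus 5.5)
Your proof is correct, and it takes a different route from the paper's.

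The paper writes $V_r=\begin{pmatrix}A&G\\B&H\end{pmatrix}$ and uses Claim~\ref{thm:state-theorem}, applied to both columns, to know the entries are real Laurent polynomials. It then argues pointwise on $|z|=1$. There $V_r$ is unitary with determinant $1$, so $AH-BG=1$ and $AG^*+BH^*=0$ force $(G,H)=(-B^*,A^*)$. Real coefficients turn $^*$ into $^\#$ on the circle, and the identity extends to all $z$ because it holds at infinitely many points.

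You instead show by induction that the form $\begin{pmatrix}P&-Q^\#\\Q&P^\#\end{pmatrix}$ is closed under left multiplication by $D$ and by $C(\phi)$, as an identity of Laurent polynomials.

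Your route is self-contained and more elementary. It does not need Claim~\ref{thm:state-theorem}, unitarity, the determinant, real coefficients of $P,Q$, or the extension from the unit circle. It also yields $AA^\#+BB^\#=1$ as a byproduct via the determinant. The paper's route is shorter given that the claim is already available. It also makes the structural reason for the form explicit: a $2\times 2$ unitary of determinant one is determined by its first column. The cost is that it relies on that extra machinery.
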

\begin{proof}
Let $V_r$ have the form
$$
V_r(z)= \begin{pmatrix}
    A(z) & G(z) \\
    B(z) & H(z)
\end{pmatrix}\,.
$$
By \Cref{thm:state-theorem}, the first column of $V_r$ is a Laurent pair in $\mathcal{S}_r$; i.e. $(A,B) \in \mathcal{S}_r$. By symmetry, the second column of $V_r$ is also a Laurent pair, i.e. $(G,H) \in \mathcal{S}_r$. 
On the unit circle, the matrix is unitary and has determinant $1$, so the following conditions hold:
\begin{align*}
    AH - BG &= 1 \\
    AG^* +  B H^* &= 0
\end{align*}
The linear equations above force $(G,H)$ to be orthogonal to $(A,B)$ and have norm 1. So $(G,H)$ must be equal to $(-B^*, A^*) = (-B^\#, A^\#)$. Since this holds on an infinite number of points, it is true for all $z$. 
\end{proof}
We then expand $V_r D$ in the $Y$-basis:
$$
V_r D =  \begin{pmatrix}
    A(z)&-B^\#(z)\\
    B(z)&A^\#(z)
  \end{pmatrix}  \begin{pmatrix}
    z^{-1} & 0 \\
    0 & z 
  \end{pmatrix} 
  =
  \begin{pmatrix}
    z^{-1} A(z)&-zB^\#(z)\\
    z^{-1} B(z)&zA^\#(z)
  \end{pmatrix}\,.
$$
\Cref{cor:state-theorem} follows by setting $r = 2p-1$.

\clearpage
\newpage
\section{Proof of existence of optimal Laurent polynomials}
\label{apx:optimal_poly}
To start, we require the following statement:
\begin{lemma}
\label{lem:roots_generic}
Consider a polynomial $V(y)$ with real coefficients that satisfies
\begin{align*}
        &V(1) > 0\,, & V(y) \ge V(1) \text{ if } y > 1\,, &  &|V(y)| \le V(1) \text{ if } {-1} \le y \le 1\,.
\end{align*}
Then the polynomial
$$
f(y) :=  -1 + (y/4 + 1) \frac{V^2(y/2 + 1)}{V^2(1)}
$$
has a simple root at $0$, and all other roots are complex conjugate pairs (i.e. not on the real line).
\end{lemma}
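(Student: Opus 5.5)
The plan is to change variables so that the hypotheses on $V$ can be applied directly. Set $x := y/2 + 1$, so that $y = 2x - 2$ and $y/4 + 1 = (x+1)/2$. Then
\[
f(y) = -1 + \frac{(x+1)\,V^2(x)}{2\,V^2(1)}\,,
\]
and $y = 0$ corresponds to $x = 1$. The map $y \mapsto x$ is a real affine bijection, so real roots of $f$ in $y$ correspond exactly to real solutions $x$ of
\[
(x+1)\,V^2(x) = 2\,V^2(1)\,.
\]
Since $V(1) > 0$, the right-hand side is strictly positive. The proof then has three parts.

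\textbf{Step 1: there is no other real root.} I will split the real line into three ranges, comparing the two factors $x+1$ and $V^2(x)$ with $2$ and $V^2(1)$.
\begin{itemize}
\item For $x \le -1$ we have $x+1 \le 0$, so the left-hand side is $\le 0$, which is strictly less than $2V^2(1)$.
\item For $-1 < x < 1$ we have $0 < x+1 < 2$. The hypothesis $|V(x)| \le V(1)$ gives $V^2(x) \le V^2(1)$. Hence the left-hand side is strictly less than $2V^2(1)$.
\item For $x > 1$ we have $x+1 > 2$. The hypothesis $V(x) \ge V(1) > 0$ gives $V^2(x) \ge V^2(1)$. Hence the left-hand side is strictly greater than $2V^2(1)$.
\end{itemize}
So the only real solution is $x = 1$, that is, $y = 0$. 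Indeed $f(0) = -1 + 1\cdot V^2(1)/V^2(1) = 0$.

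\textbf{Step 2: the root at $0$ is simple.} Differentiating,
\[
f'(0) = \frac14 + 1 \cdot \frac{2V(1)V'(1)\cdot\frac12}{V^2(1)} = \frac14 + \frac{V'(1)}{V(1)}\,.
\]
The hypothesis $V(y) \ge V(1)$ for all $y > 1$ forces the right derivative at $1$ to be nonnegative, so $V'(1) \ge 0$. Together with $V(1) > 0$ this gives $f'(0) \ge 1/4 > 0$, so the root at $0$ is simple.

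\textbf{Step 3: the remaining roots pair up.} The polynomial $f$ has real coefficients and degree $2\deg V + 1 \ge 1$, so it is nonconstant. By Steps 1 and 2, every root other than the simple root $0$ is non-real. Non-real roots of a real polynomial occur in complex conjugate pairs with equal multiplicities, which proves the claim.

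I do not expect a serious obstacle here. The only delicate point is the strictness of the inequalities in Step 1 on each side of $x = 1$. That strictness comes from the factor $x+1$, not from $V$, so it holds even when $|V(x)| = V(1)$ at interior points such as $x = -1$.
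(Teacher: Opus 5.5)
Your proof is correct and follows essentially the same route as the paper: after the substitution $x=y/2+1$, your ranges $x\le -1$, $-1<x<1$, $x>1$ are the paper's cases $y<-4$, $-4\le y<0$, $y>0$, and you use the hypotheses on $V$ the same way. The only difference is in how simplicity of the root at $0$ is certified: you compute $f'(0)=\tfrac14+V'(1)/V(1)\ge\tfrac14$ directly, whereas the paper shows $f(y)/y\ge\tfrac14$ on both sides of $0$ and gets the same bound at the removable singularity by continuity.
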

\begin{proof}
At $y = 0$, we have 
$
f(0) = -1 + \frac{V^2(1)}{V^2(1)} = 0
$.
We then consider
$$
\frac{f(y)}{y} = -\frac{1}{y} + (1/y + 1/4) \frac{V^2(y/2 + 1)}{V^2(1)}\,.
$$
We show $\frac{f(y)}{y}$ has no roots on the real line. Then it will have only complex roots; since all coefficients are real, these complex roots come in conjugate pairs. We split into four cases:
\begin{itemize}
    \item When $y < -4$, both terms are positive, so $f(y)/y > 0$. 
    \item When  $-4 \le y < 0$,  we have $|y/2 + 1| \le 1$, and so $|V(y/2 + 1)| \le |V(1)|$.  This implies $f(y)/y \ge -1/y + (1/y + 1/4) = 1/4 > 0$.
    \item When $y = 0$, $f(y)/y$ has a removable singularity, which is at least $1/4$ by continuity. 
    \item When $y > 0$, we have $V(y/2 + 1) \ge V(1)$. This implies $f(y)/y \ge 1/4$. \qedhere
\end{itemize}
\end{proof}
\begin{claim}
\label{claim:find_poly_pair_from_cheby}
    Fix $p \ge 1$. Consider a degree-$p$ polynomial $V(y)$ with real coefficients that satisfies
\begin{align*}
        &V(1) > 0\,, & V(y) \ge V(1) \text{ if } y > 1\,, &  &|V(y)| \le V(1) \text{ if } {-1} \le y \le 1\,.
\end{align*}
Then there exists a polynomial pair $(A,B) \in \mathcal{S}_{2p-1}$ that satisfies $B^\#= -B$ and
\begin{align*}
    F(z) &= \frac{1}{2} \left( z^{-2} A(z) + z^2 A^\#(z) - iB(z) - i B^\#(z) \right)  = (z + z^{-1}) \frac{V(z^2/2 + z^{-2}/2)}{2 V(1)}\,.
\end{align*}
\end{claim}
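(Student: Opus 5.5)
The plan is to turn the problem into a one-variable sum-of-squares factorization (Fejér--Riesz type) and use \Cref{lem:roots_generic} to control the roots. Write $T(z) := (z+z^{-1})\frac{V(w)}{2V(1)}$ with $w = (z^2+z^{-2})/2$. On the unit circle $z=e^{i\theta/2}$ this gives $w=\cos\theta$ and $T=\cos(\theta/2)V(\cos\theta)/V(1)$, which is real.

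Because we impose $B^\#=-B$, the $B$-terms in $F$ cancel, and $F=\frac12(G+G^\#)$ with $G(z):=z^{-2}A(z)$. So the first constraint is only on the ``real part'' of $G$. The general solution is $G = T + R$ with $R^\# = -R$. On the circle this reads $G = T + iS$ with $S$ a real-valued odd sine-type polynomial. The support of $G$ must lie in exponents $\{-2p-1,\dots,2p-3\}$, since $A$ has odd parity and degree $\le 2p-1$. This forces the coefficients of $z^{2p+1}$ and $z^{2p-1}$ in $R$ to cancel those of $T$. Thus $R$ is pinned at the top and free below.

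Next, on the circle $B=i b$ with $b$ real. The normalization $AA^\#+BB^\#=1$ becomes
\[
1 - T^2 = S^2 + b^2 = |S+ib|^2 ,
\]
so it suffices to write $1-T^2 = Q Q^\#$ for a suitable Laurent polynomial $Q$ with real coefficients. We then read off $S$ and $b$ as the antisymmetric pieces of $Q$ and of $iQ$, after a rotation/reflection of $Q$.

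To factor $1-T^2$, I will express it as a polynomial in $w=\cos\theta$. We have $T^2 = \frac{1+w}{2}\frac{V(w)^2}{V(1)^2}$, so $1-T^2$ is, after the affine change of variable used in \Cref{lem:roots_generic}, exactly $-f$. The lemma says $1-T^2$ has a simple root at $w=1$, and all other roots are non-real and come in conjugate pairs. The root at $w=1$ contributes the factor $1-w = -\frac{(z-z^{-1})^2}{4}\cdot(\ldots)$, which becomes $|z-z^{-1}|^2/4$-type and fits naturally into $b$ (matching $B^\#=-B$). For each conjugate pair $\{\alpha,\bar\alpha\}$, the factor $(w-\alpha)(w-\bar\alpha)$ lifts to $z$ as a product over four roots $\{\zeta,\zeta^{-1},\bar\zeta,\bar\zeta^{-1}\}$ of $z^2$. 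From each such quadruple I pick the pair $\{\zeta,\bar\zeta\}$, so the resulting factor has real coefficients. Because $\alpha\notin[-1,1]$, no root of $z^2$ lies on the unit circle, so this selection is always possible. Multiplying these factors together with the $(z-z^{-1})$ factor and the right constant gives $Q$ with $QQ^\#=1-T^2$, real coefficients, and exponents of a single parity.

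The main obstacle is bookkeeping in this last step: verifying that the resulting $S$ and $b$ have the right parity and degree. Concretely, $b$ must have odd exponents with degree $\le 2p-1$, and $S$ must be compatible with the top-coefficient cancellation forced on $R$. I expect a degree count to handle this: $1-T^2$ has degree $4p+2$ in $z$, so $Q$ has degree $2p+1$. Multiplying $Q$ by a suitable monomial $z^{\pm 1}$, or by a phase $\pm i$, should then place its symmetric and antisymmetric parts in the allowed ranges. If the degrees do not match directly, the fallback is a joint factorization. In that version I do not fix $R$ first. Instead I factor $1-T^2$ and then choose $R$ together with $b$ from $Q$: write $Q = R' + b'$ (antisymmetric plus symmetric parts, up to $i$), and check by comparing leading coefficients that the top coefficient of $R'$ can be made to cancel that of $T$. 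Once $(A,B)$ is constructed, membership in $\mathcal{S}_{2p-1}$ follows by checking the four defining conditions. The identity $AA^\#+BB^\#=1$ holds on the circle by construction, hence everywhere as a polynomial identity.
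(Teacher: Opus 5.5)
\textbf{There is a genuine gap.} It is in the step you call sufficient: a real-coefficient factorization $1-T^2=QQ^\#$ does not give you $S$ and $b$. You showed that $B^\#=-B$ and $R^\#=-R$ are forced. So on the circle, $S$ and $b$ must \emph{both} be sine polynomials in odd multiples of $\theta/2$. A real-coefficient $Q$ with odd exponents instead splits as $Q_s+Q_a$. On the circle this is a cosine polynomial plus $i$ times a sine polynomial, so $|Q|^2$ is a cosine square plus a sine square. Only the sine piece can serve as $S$ or $b$; the cosine piece cannot. Your proposed repairs do not fix this:
\begin{itemize}
\item multiplying by $\pm i$ leaves the symmetry type of each piece unchanged and makes the coefficients imaginary;
\item multiplying by $z^{\pm1}$ destroys the odd parity;
\item the joint-factorization fallback reads off a symmetric piece as $b'$, which has the same defect.
\end{itemize}
Concretely, choosing $\{\zeta,\bar\zeta\}$ from each root quadruple is the wrong selection. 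It buys real coefficients but loses the $z\leftrightarrow z^{-1}$ symmetry you actually need.

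\textbf{The missing idea is to stay in the single real variable $x=\cos\theta=(z^2+z^{-2})/2$.} Every antisymmetric real Laurent polynomial with odd exponents has the form $(z-z^{-1})$ times a real polynomial in $x$. So you need $R=(z-z^{-1})r(x)$ and $B=(z-z^{-1})\beta(x)$ with
\[
1-T^2=2(1-x)\bigl(r(x)^2+\beta(x)^2\bigr),
\]
as an identity of real polynomials in $x$. This needs $(1-T^2)/(1-x)$ to be positive on all of $\R$, not merely on $[-1,1]$. That is the real content of \Cref{lem:roots_generic}; you used it only to keep roots of $z^2$ off the unit circle. Then set $q(x)=c\prod_m(x-\alpha_m)$, taking one root from each conjugate pair and a real leading coefficient. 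Reading coefficientwise, $r=\mathrm{Re}\,q$ and $\beta=\mathrm{Im}\,q$. In your language, you should pick $\{\zeta,\zeta^{-1}\}$, not $\{\zeta,\bar\zeta\}$. This is the paper's construction, written in $y=(z-z^{-1})^2=2x-2$.

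\textbf{The degree bookkeeping also needs two things your plan does not supply.}
\begin{itemize}
\item $\deg\beta\le p-1$, which holds because $q$ has a real leading coefficient. This gives $\deg B\le 2p-1$.
\item Cancellation of \emph{both} the $z^{2p+1}$ and the $z^{2p-1}$ coefficients of $T+R$; your plan only addresses the top one. The top one is fixed by the sign of $q$. The second follows by comparing $z^{4p}$ coefficients in $-R^2-B^2=1-T^2$, since $B^2$ has degree at most $4p-2$.
\end{itemize}
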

\begin{proof}
By \Cref{lem:roots_generic}, the polynomial
$$
f(y) :=  -1 + (y/4 + 1) \frac{V^2(y/2 + 1)}{V^2(1)}\,.
$$
has roots of the form $\{0, a_1, a_1^*, \dots, a_p, a_p^*\}$.
Let $c$ be the leading coefficient of $\frac{V(y/2)}{2V(1)}$.
We decompose
\begin{align*}
f(y) &= y q(y) q^*(y) \,,\\
    q(y) &:=   -c \prod_{m=1}^p \left(y - a_m \right) = w(y)  + ib(y) \,, \\
     q^*(y) &:=   -c \prod_{m=1}^p \left(y - a_m^* \right) = w(y) - ib(y) \,,
\end{align*}
where $w$ and $b$ have real coefficients. We then choose
\begin{align*}
&A(z) :=  z^2  (z - z^{-1}) \cdot w\left( (z - z^{-1})^2 \right) + z^2 (z + z^{-1}) \frac{V(z^2/2 + z^{-2}/2)}{2 V(1)}\,,  \\
&B(z) := (z - z^{-1})  \cdot b\left((z - z^{-1})^2\right)\,.
\end{align*}
We verify that $(A, B) \in \mathcal{S}_{2p-1}$.
All coefficients are real, and all powers of $z$ have odd parity. We defer checking the degree to the end. Here we check the normalization. Note that $B$ is $(z-z^{-1})$ times a function of $(z - z^{-1})^2$, so $B^\# = -B$. Next, $z^{-2} A$ has two parts: when exchanging $z$ with $z^{-1}$, the first part is odd  and the second part is even. So 
$$
A^\# A = (z^{-2} A) (z^{-2} A)^\# = -(z - z^{-1})^2 \cdot w^2\left( (z - z^{-1})^2 \right) + (z + z^{-1})^2 \frac{V^2(z^2/2 + z^{-2}/2)}{4 V^2(1)}\,.
$$
Altogether, 
\begin{align*}
A^\# A + B^\# B 
&= - (z - z^{-1})^2 \Big(w^2\left( (z - z^{-1})^2 \right) + b^2\left( (z - z^{-1})^2 \right) \Big) +(z + z^{-1})^2 \frac{V^2(z^2/2 + z^{-2}/2)}{4 V^2(1)} \\
&= -y (w(y) + ib(y))(w(y)-ib(y))
+ (y+4) \frac{V^2(y/2 + 1)}{4 V^2(1)} \\
&= (y/4 + 1) \frac{V^2(y/2 + 1)}{V^2(1)} - f(y)\,,
\end{align*}
where $y := (z - z^{-1})^2$. So the normalization is exactly $1$. 

We now inspect the degree of $A$ and $B$. Since the leading coefficient of $q$ is real, $b(y)$ has (ordinary) degree $\le (p-1)$. So $B(z)$ has Laurent degree at most $2p-1$. Meanwhile, $w(y)$ has ordinary degree $\le p$, so the Laurent exponents of $A(z)$ range from $1-2p$ through $2p + 3$. We confirm the coefficients at degree $2p+1$ and $2p+3$ are zero. The degree-$(2p+3)$ coefficient of $A$ is the sum of $-c$ (from the odd part) and $c$ (from the even part). So it is zero.

Now we check the degree-$(2p+1)$ coefficient of $A$. 
Since the normalization $A^\# A + B^\# B = 1$, 
$$
(z - z^{-1})^2 w^2(y) + (z - z^{-1})^2  b^2(y) =-1 +  (z + z^{-1})^2\frac{V^2(z^2/2 + z^{-2}/2)}{4V^2(1)}\,.
$$
Since $(z - z^{-1})^2 b^2(y)$ has degree $(4p-2)$ in $z$, the degree-$4p$ coefficient in $z$ only comes from $(z - z^{-1})^2  w^2(y)$ and the right-hand side.  On the right-hand side, it must be $2c c'$, where $c'$ is the next-largest coefficient of $(z+z^{-1})\frac{V(z^2/2 + z^{-2}/2)}{2V(1)}$. On the left-hand side, it is 
$$
2 d_{2p+1} d_{2p-1}\,,
$$
where 
$d_{j}$ is the degree-$j$ coefficient of $(z - z^{-1}) w(y)$.
We have already proved that $d_{2p+1} = -c \ne 0$, so it must be that $d_{2p-1} = -c'$. By inspecting the expression for $A$, the degree-$(2p+1)$ coefficient of $A(z)$ must also be zero. So $A$ has Laurent degree at most $2p-1$.

Finally, we check that $(A,B)$ generate the correct $F$. Recall that $B^\# = -B$ and the first part of $z^{-2} A$ is also odd when exchanging $z$ with $z^{-1}$. Then we get
\begin{align*}
F(z) &= \frac{1}{2} \left( z^{-2} A(z) + z^2 A^\#(z) - i B(z) - iB^\#(z)  \right)  \\
&= \frac{1}{2} \left( z^{-2} A(z)  + \left(z^{-2} A(z)\right)^\# \right) \\
&=
 (z + z^{-1}) \frac{V(z^2/2 + z^{-2}/2)}{2 V(1)}\,. \tag*{\qedhere}
\end{align*}
\end{proof}
\end{document}